\newtheorem{hyp}{Assumption}
\begin{document}

\title*{Dispersive estimates for Schr\"odinger operators with point interactions in $\mathbb{R}^3$ }
\author{Felice Iandoli, Raffaele Scandone\thanks{Partially supported by the 2014-2017 MIUR-FIR grant ``\emph{Cond-Math: Condensed Matter and Mathematical Physics}'', code RBFR13WAET and by Gruppo Nazionale per la Fisica Matematica (GNFM-INdAM).}}
\institute{Felice Iandoli \at International School for Advanced Studies – SISSA, via Bonomea 265 34136 Trieste, Italy  \email{fiandoli@sissa.it}
\and Raffaele Scandone \at International School for Advanced Studies – SISSA, via Bonomea 265 34136 Trieste, Italy  \email{rscandone@sissa.it}}

\maketitle

\abstract*{The study of dispersive properties of Schr\"odinger operators with point interactions is a fundamental tool for understanding the behavior of many body quantum systems interacting with very short range potential, whose dynamics can be approximated by non linear Schr\"odinger equations with singular interactions. In this work we proved that, in the case of one point interaction in $\mathbb{R}^3$, the perturbed Laplacian satisfies the same $L^p-L^q$ estimates of the free Laplacian in the smaller regime $q\in[2,3)$. These estimates are implied by a recent result concerning the $L^p$ boundedness of the wave operators for the perturbed Laplacian. Our approach, however, is more direct and relatively simple, and could potentially be useful to prove optimal weighted estimates also in the regime $q\geq 3$. }

\abstract{The study of dispersive properties of Schr\"odinger operators with point interactions is a fundamental tool for understanding the behavior of many body quantum systems interacting with very short range potential, whose dynamics can be approximated by non linear Schr\"odinger equations with singular interactions. In this work we proved that, in the case of one point interaction in $\mathbb{R}^3$, the perturbed Laplacian satisfies the same $L^p-L^q$ estimates of the free Laplacian in the smaller regime $q\in[2,3)$. These estimates are implied by a recent result concerning the $L^p$ boundedness of the wave operators for the perturbed Laplacian. Our approach, however, is more direct and relatively simple, and could potentially be useful to prove optimal weighted estimates also in the regime $q\geq 3$. }

\section{Introduction}
In quantum mechanics, a huge variety of phenomena are described by system of quantum particles interacting with a very short range potentials, supported near away a discrete set of points in $\mathbb{R}^d$. This leads to the study of Hamiltonians which formally are defined as
\begin{equation}\label{heu}
H_{\mu,Y}=``-\Delta+\sum_{y\in Y}\mu_{j}\delta_y"
\end{equation}
where $-\Delta$ is the free Laplacian on $\mathbb{R}^d$, $Y:=\{y_1,y_2,\ldots\}$ is a countable discrete subset of $\mathbb{R}^d$ and $\mu_{y_j}$ are real coupling constants. Thus $H$ describes the motion of a quantum particle interacting with a ``contact potentials", created by point sources of strength $\mu_{y_j}$ centered at $y_j$. 
The first appearance of such Hamiltonians dates back to the celebrated paper of Kronig and Penney \cite{KP}, where they consider the case $d=1$, $Y=\mathbb{Z}$ and  $\mu_{y}$ independent on $y$ as a model of a nonrelativistic electron moving in a fixed crystal lattice. The mathematical rigorous study of $H_{\mu,Y}$ was initiated by Albeverio, Fenstad and H\o egh-Khron \cite{AFH}, and subsequently continued by other authors (see for instance \cite{Z}, \cite{GHM}, \cite{GHM1}, \cite{DG}). In this work we focus on the case of finitely many point interactions on $\mathbb{R}^3$. The rigorous definition of $H_{\mu,Y}$ is based on the theory of self adjoint extension of symmetric operators (see \cite{AGHH} for a complete and detailed discussion): one starts with
\begin{equation}
\tilde{H}_Y:=-\Delta|\mathcal{C}_0^{\infty}(\mathbb{R}^3\backslash\{ Y\}),
\end{equation}
which is a densely defined, non-negative, symmetric operator on $L^2(\mathbb{R}^3)$, with deficiency indices $(N,N)$, and hence it admits a $N^2$-parameter family of self adjoint extension. Among these, we find the important subfamily of the so called local extension, characterized by the following proposition (see \cite{AGHH}, \cite{DMSY}):
\begin{proposition}
Fix $Y:=(y_1,\ldots ,y_N)\subset\mathbb{R}^3$ and $\alpha:=(\alpha_1,\alpha_2,\ldots ,\alpha_N)\in(-\infty,+\infty]^N$. Given $z\in\mathbb{C}$, define
\begin{equation}
G_z(x):=\frac{e^{iz|x|}}{4\pi|x|},\qquad
\tilde{G}_z(x):=\begin{cases}
\frac{e^{iz|x|}}{4\pi|x|}&x\neq 0\\
0&x=0
\end{cases}
\end{equation}
and the $N\times N$  matrix
\begin{equation}
[\Gamma_{\alpha,Y}(z)]_{(j,l)}:=\left[\left(\alpha_j-\frac{iz}{4\pi}\right)\delta_{j,l}-\tilde{G}_z(y_j-y_l)\right]_{(j,l)}
\end{equation}
The meromorphic function $z\mapsto[\Gamma_{\alpha,Y}(z)]^{-1}$ has at most $N$ poles in the upper half space $\mathbb{C}^+$, which are all located along the positive imaginary semi-axis. We denote by $\mathcal{E}$ the set of such poles.
There exists a self adjoint extension $H_{\alpha,Y}$ of $\tilde{H}_{Y}$ with the following properties:
\begin{itemize}
\item Given $z\in\mathbb{C}^+\backslash\{\mathcal{E}\}$, the domain of $H_{\alpha,Y}$ can be written as:
\begin{equation}\label{domain}
\mathcal{D}(H_{\alpha ,Y})=\left\{ \psi:=\phi_z+\sum_{j,l=1}^N(\Gamma_{\alpha,Y}(z)^{-1})_{j,l}\phi_z(j_l)G_z(\cdot-{y_j})\,,\,\phi_{z}\in H^2\right\}.
\end{equation}
The decomposition is unique for a given $z$.
\item With respect to the decomposition \eqref{domain}, the action of $H_{\alpha ,Y}$ is given by
\begin{equation}
(H_{\alpha,Y}-z^2)\psi:=(-\Delta-z^2)\phi.
\end{equation}
\end{itemize}
\end{proposition}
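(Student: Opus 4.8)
The plan is to realize $H_{\alpha,Y}$ inside the von Neumann--Krein theory of self-adjoint extensions, using as reference operator the free Laplacian, whose resolvent $(-\Delta-z^2)^{-1}$ has integral kernel exactly $G_z(x-y)$ for $z\in\mathbb{C}^+$. The first step is to determine the deficiency subspaces of $\tilde H_Y$. For $z\in\mathbb{C}^+$ we have $\mathrm{Im}\,z>0$, so $G_z(\cdot-y_j)$ decays exponentially and lies in $L^2(\mathbb{R}^3)$; moreover $(-\Delta-z^2)G_z(\cdot-y_j)=\delta_{y_j}$ is supported in $Y$, hence $G_z(\cdot-y_j)\in\ker(\tilde H_Y^\ast-z^2)$. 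A direct computation shows that the $N$ functions $\{G_z(\cdot-y_j)\}_{j=1}^N$ are linearly independent and span this kernel, so the deficiency indices are $(N,N)$ and every $u\in\mathcal{D}(\tilde H_Y^\ast)$ splits uniquely as a regular part $\phi_z\in H^2$ plus a singular part $\sum_j c_j\,G_z(\cdot-y_j)$. Since $H^2(\mathbb{R}^3)\hookrightarrow C^0$, the values $\phi_z(y_l)$ are well defined and, together with the charges $c_j$, form the boundary data of the problem.

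The second step is to impose the \emph{local} boundary conditions and read off $\Gamma_{\alpha,Y}$. Expanding the singular part near a center, $G_z(x-y_j)=\tfrac{1}{4\pi|x-y_j|}+\tfrac{iz}{4\pi}+o(1)$ as $x\to y_j$, so the regularized value of $u$ at $y_j$ equals $\phi_z(y_j)+\tfrac{iz}{4\pi}c_j+\sum_{l\neq j}\tilde G_z(y_j-y_l)c_l$. The local extension is defined by requiring that at each center this regularized value be proportional to the charge with ratio $\alpha_j$, which is precisely the linear system
\begin{equation*}
\phi_z(y_j)=\sum_{l=1}^N[\Gamma_{\alpha,Y}(z)]_{jl}\,c_l,\qquad j=1,\dots,N.
\end{equation*}
Inverting it on $\mathbb{C}^+\setminus\mathcal{E}$ gives $c_j=\sum_l(\Gamma_{\alpha,Y}(z)^{-1})_{jl}\,\phi_z(y_l)$, which is exactly the decomposition \eqref{domain}; and since $(-\Delta-z^2)$ kills each $G_z(\cdot-y_j)$ away from $Y$, the action reduces to $(H_{\alpha,Y}-z^2)\psi=(-\Delta-z^2)\phi_z$. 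The diagonal and real structure of the boundary condition makes the associated subspace Lagrangian, which yields self-adjointness; equivalently, I would verify Krein's resolvent formula
\begin{equation*}
(H_{\alpha,Y}-z^2)^{-1}=(-\Delta-z^2)^{-1}+\sum_{j,l=1}^N(\Gamma_{\alpha,Y}(z)^{-1})_{jl}\,\langle G_{-\bar z}(\cdot-y_l),\,\cdot\,\rangle\,G_z(\cdot-y_j),
\end{equation*}
using the identity $\langle G_{-\bar z}(\cdot-y_l),f\rangle=\phi_z(y_l)$ to match the domain formula, and then check that the right-hand side is a pseudoresolvent solving the resolvent equation.

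For the analytic claim, note that each entry of $\Gamma_{\alpha,Y}(z)$ is entire in $z$, so $\det\Gamma_{\alpha,Y}(z)$ is entire and $\Gamma_{\alpha,Y}(z)^{-1}$ is meromorphic with poles only at its zeros. If $\det\Gamma_{\alpha,Y}(z_0)=0$ for some $z_0\in\mathbb{C}^+$, a null vector $q\neq 0$ of $\Gamma_{\alpha,Y}(z_0)$ produces $\psi_0=\sum_j q_j\,G_{z_0}(\cdot-y_j)\in L^2$, which (with vanishing regular part) satisfies the boundary conditions, lies in $\mathcal{D}(H_{\alpha,Y})$, and obeys $H_{\alpha,Y}\psi_0=z_0^2\psi_0$. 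Self-adjointness forces $z_0^2\in\mathbb{R}$, and together with $z_0\in\mathbb{C}^+$ this gives $z_0\in i(0,+\infty)$, i.e. $z_0^2<0$; so the poles lie on the positive imaginary semi-axis and correspond exactly to the negative eigenvalues of $H_{\alpha,Y}$. Finally, since the resolvent of $H_{\alpha,Y}$ differs from that of $-\Delta\geq0$ by an operator of rank at most $N$, the min-max principle bounds the number of negative eigenvalues, hence of poles, by $N$.

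The main obstacle I anticipate is the genuinely \emph{singular} nature of the perturbation: the deficiency elements $G_z(\cdot-y_j)$ do not belong to $H^1$, the formal potential $\sum_j\mu_j\delta_{y_j}$ is not form bounded, and the naive coupling must be renormalized --- this is the origin of the self-energy term $-iz/4\pi$ and of the fact that the physical parameter $\alpha_j$ differs from the formal strength $\mu_j$. Making rigorous the regularized boundary value and the asymptotic expansion of the singular part, and checking that the diagonal condition selects a self-adjoint (rather than merely symmetric) operator, is the technical heart of the argument; everything else is a careful but routine application of extension theory together with the explicit form of the free resolvent.
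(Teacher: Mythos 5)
This proposition is stated in the paper as background and is not proved there; the authors simply cite \cite{AGHH} and \cite{DMSY}. Your outline reproduces, essentially correctly, the standard construction from those references: identifying $\{G_z(\cdot-y_j)\}_{j=1}^N$ as a basis of $\ker(\tilde H_Y^\ast-z^2)$, splitting elements of $\mathcal{D}(\tilde H_Y^\ast)$ into a regular $H^2$ part plus charges, imposing the local (Bethe--Peierls) boundary condition to produce exactly the matrix $\Gamma_{\alpha,Y}(z)$, and locating the poles via the eigenvalue argument combined with the rank-$N$ bound on the resolvent difference. Two points in your sketch carry the actual technical weight and are asserted rather than established: the claim that the diagonal real boundary condition defines a \emph{self-adjoint} (not merely symmetric) restriction of $\tilde H_Y^\ast$ --- equivalently, the verification that the right-hand side of Krein's formula is a genuine resolvent --- and the fact that the domain characterization \eqref{domain} is independent of the chosen $z\in\mathbb{C}^+\setminus\mathcal{E}$, which you implicitly use when you place the candidate eigenfunction $\psi_0$ (built at the pole $z_0$) inside $\mathcal{D}(H_{\alpha,Y})$. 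You should also say a word about the degenerate case $\alpha_j=+\infty$, where the $j$-th center is simply dropped. With those gaps filled in, this is the argument of \cite{AGHH}.
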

\begin{remark}
The family of self adjoint operator $H_{\alpha,Y}$ realizes in a rigorous way the heuristic definition given by expression \eqref{heu}. It is worth noticing the different roles played by parameters: while $\mu_j$ measures the strength of the point  interactions at $y_j$,  $\alpha_j$ is related to the scattering length.  Indeed, a generic function $\psi\in\mathcal{D}(H_{\alpha ,Y})$, $\alpha\neq 0$, satisfies the so called \emph{Bethe-Peierls contact condition}
\begin{equation}\label{beth}
\psi(x)\mathop{\sim}_{x\rightarrow y_i}\frac{1}{|x-y_j|}+4\pi\alpha_j,\quad j=1,\ldots ,N
\end{equation}
which is typical  for the low-energy behavior of an eigenstate of the Schr\"{o}dinger equation for a quantum particle subject to a very short range potentials, centered at $y_j$ and with $s$-wave scattering length $-(4\pi\alpha)^{-1}$ ( see the works of  Bethe and Peierls \cite{BP1}, \cite{BP2}).
When $\alpha_j=+\infty$, no actual interactions take place at $y_j$  (the s-wave has infinity scattering length); in particular when $\alpha=+\infty$ we recover the Friedrichs extension of $\tilde{H}_{Y}$, namely the free Laplacian on $L^2(\mathbb{R}^3)$.   
\end{remark}
The spectral properties of $H_{\alpha,Y}$ are well known and completely characterized; we encode them in the following proposition (see \cite{AGHH}, \cite{DMSY}):
\begin{proposition}~
\begin{enumerate}
\item The spectrum $\sigma(H_{\alpha,Y})$ of $H_{\alpha,Y}$ consists of at most $N$ negative eigenvalues and the absolutely continuous part $\sigma_{ac}(H_{\alpha,Y})=[0,+\infty)$. Moreover, there exists a one to one correspondence between the poles $i\lambda\in\mathcal{E}$ and the negative eigenvalues $-\lambda^2$ of $H_{\alpha,Y}$, counted with multiplicity.
\item The resolvent of $H_{\alpha,Y}$ is a rank $N$ perturbation of the free resolvent, and it is given by:
\begin{equation}
(H_{\alpha,Y} -z^2)^{-1} -(H_0-z^2)^{-1} \;=\; 
\sum_{j,k=1}^N  (\Gamma_{\alpha,Y}(z)^{-1})_{jk} \,G_{z}^{y_j}\otimes   
\overline{G_{z}^{y_k}}.
\end{equation}
\end{enumerate}
\end{proposition}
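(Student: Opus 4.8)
The plan is to obtain the resolvent formula of part (2) first, by a direct computation from the domain decomposition of the preceding proposition, and then to extract all the spectral content of part (1) from that formula together with finite-rank perturbation theory. To produce the Krein-type formula I would fix $z\in\mathbb{C}^+\setminus\mathcal{E}$, take an arbitrary $f\in L^2(\mathbb{R}^3)$, and look for $\psi\in\mathcal{D}(H_{\alpha,Y})$ solving $(H_{\alpha,Y}-z^2)\psi=f$. By the characterization \eqref{domain}, $\psi$ decomposes uniquely as $\psi=\phi_z+\sum_{j,l}(\Gamma_{\alpha,Y}(z)^{-1})_{jl}\,\phi_z(y_l)\,G_z(\cdot-y_j)$ with $\phi_z\in H^2$, and the action reads $(H_{\alpha,Y}-z^2)\psi=(-\Delta-z^2)\phi_z$. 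Hence $\phi_z$ is forced to be the unique $H^2$ solution of $(-\Delta-z^2)\phi_z=f$, that is $\phi_z=(H_0-z^2)^{-1}f$. Since the integral kernel of the free resolvent in $\mathbb{R}^3$ is exactly $G_z(x-y)$, evaluation at the centres gives $\phi_z(y_l)=\langle\overline{G_z^{y_l}},f\rangle$; substituting back reproduces precisely the claimed sum, with the rank-one operators $G_z^{y_j}\otimes\overline{G_z^{y_k}}$ emerging from this pairing. The only points needing care are that $G_z(\cdot-y_l)\in L^2$ and that $(H_0-z^2)^{-1}f\in H^2$ has a well-defined pointwise value at $y_l$ (legitimate in dimension three by Sobolev embedding); the identity then propagates to all $z\in\rho(H_{\alpha,Y})\cap\mathbb{C}^+$ by uniqueness of the resolvent.

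For part (1), I would first note that the formula just derived displays $(H_{\alpha,Y}-z^2)^{-1}-(H_0-z^2)^{-1}$ as an operator of rank at most $N$, hence trace class. By the Kato–Rosenblum theorem in its resolvent form the absolutely continuous parts of $H_{\alpha,Y}$ and $H_0$ are unitarily equivalent, so $\sigma_{ac}(H_{\alpha,Y})=\sigma_{ac}(H_0)=[0,+\infty)$, while Weyl's theorem preserves the essential spectrum, $\sigma_{ess}(H_{\alpha,Y})=[0,+\infty)$.

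To locate the discrete spectrum I would use that $(H_0-z^2)^{-1}$ is analytic on the whole of $\mathbb{C}^+$, which $z\mapsto z^2$ maps onto the resolvent set $\mathbb{C}\setminus[0,+\infty)$ of $H_0$; therefore the only singularities of $(H_{\alpha,Y}-z^2)^{-1}$ in $\mathbb{C}^+$ are the poles of $z\mapsto\Gamma_{\alpha,Y}(z)^{-1}$, i.e. the set $\mathcal{E}$, which by the preceding proposition lies on $i(0,+\infty)$ and has at most $N$ elements. Writing $z=i\lambda$ with $\lambda>0$, a pole at $i\lambda$ corresponds to a pole of the resolvent at the real point $z^2=-\lambda^2$, hence to an eigenvalue $-\lambda^2$, and conversely. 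For the multiplicity bookkeeping I would show that $-\lambda^2$ is an eigenvalue iff $\det\Gamma_{\alpha,Y}(i\lambda)=0$, with eigenspace $\{\sum_j c_j\,G_{i\lambda}^{y_j}:\,c\in\ker\Gamma_{\alpha,Y}(i\lambda)\}$, the membership in $\mathcal{D}(H_{\alpha,Y})$ being the Bethe–Peierls condition \eqref{beth} encoded in $\Gamma_{\alpha,Y}$; since $G_{i\lambda}^{y_1},\dots,G_{i\lambda}^{y_N}$ are linearly independent for distinct centres, the geometric multiplicity equals $\dim\ker\Gamma_{\alpha,Y}(i\lambda)$, which matches the order of the pole and yields the one-to-one correspondence counted with multiplicity.

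The hard part will be the remaining assertion implicit in the statement, namely that on $[0,+\infty)$ the spectrum is purely absolutely continuous, i.e. the absence of embedded eigenvalues and of singular continuous spectrum. This is not delivered by the soft perturbative arguments above and instead calls for a limiting absorption principle: I would prove that $\Gamma_{\alpha,Y}(z)$ remains invertible for real $z\neq 0$, so that the resolvent acquires continuous boundary values from $\mathbb{C}^+$ onto $(0,+\infty)$ in suitable weighted $L^2$ spaces, whence the standard machinery excludes singular continuous spectrum and positive eigenvalues. The threshold $z=0$ would require separate care, distinguishing the generic case from the possible presence of a zero-energy resonance.
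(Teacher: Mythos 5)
The paper offers no proof of this proposition: it is quoted as a known result with pointers to \cite{AGHH} and \cite{DMSY}, so there is no ``paper's proof'' to compare against line by line. Your reconstruction is essentially the standard argument from those references, and its core is sound: deriving the Krein-type resolvent formula by solving $(H_{\alpha,Y}-z^2)\psi=f$ through the domain decomposition \eqref{domain} (with $\phi_z=(H_0-z^2)^{-1}f$ and pointwise evaluation at the centres justified by $H^2(\mathbb{R}^3)\hookrightarrow C^0$) is exactly how the formula is obtained, and the deduction of $\sigma_{ac}=[0,+\infty)$ and $\sigma_{ess}=[0,+\infty)$ from finite-rank resolvent differences via Kato--Rosenblum and Weyl is correct. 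Two points deserve flagging. First, in the multiplicity bookkeeping you slide between ``order of the pole of $\Gamma_{\alpha,Y}(z)^{-1}$'' and ``$\dim\ker\Gamma_{\alpha,Y}(i\lambda)$''; for a general matrix-valued analytic function these need not coincide, so the correspondence should be phrased (as in \cite{AGHH}) directly through the eigenspace $\{\sum_j c_j G_{i\lambda}^{y_j}\colon c\in\ker\Gamma_{\alpha,Y}(i\lambda)\}$, together with a short verification that these functions actually lie in $\mathcal{D}(H_{\alpha,Y})$ (write $G_{i\lambda}^{y_j}=G_{z'}^{y_j}+(G_{i\lambda}^{y_j}-G_{z'}^{y_j})$ and check the difference is $H^2$). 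Second, and more substantially, the purity of the spectrum on $[0,+\infty)$ --- no embedded eigenvalues, no singular continuous part --- is only announced as a plan; the invertibility of $\Gamma_{\alpha,Y}(z)$ for real $z\neq 0$ is a genuine theorem (note that the paper's Assumption~\ref{assu} concerns precisely the behaviour at and near $z=0$, which can fail), and without it your argument proves strictly less than the stated proposition. As a blind reconstruction of a cited classical result this is a faithful and essentially correct outline, but that last step is the one place where real analytic work remains.
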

We conclude this introduction by observing that $H_{\alpha,Y}$ can be also realized as limit of scaled short range Schr\"odinger operator. Indeed we have the following Proposition (see \cite{AGHH}):
\begin{proposition}\label{approris}

Fix $\alpha\in (-\infty, +\infty]^N$ and $Y=\{y_1,\ldots ,y_N\}\subset\mathbb{R}^3$. There exists real valued potential $V_1,\ldots V_N$ of finite Rollnik norm, and real analytic functions $\lambda_j:\mathbb{R}\rightarrow\mathbb{R}$, with $\lambda_j(0)=1$, such that the family of Schr\"odinger operators
\begin{equation}
H_{\epsilon}:=-\Delta+\sum_{j=1}^N\frac{\lambda_j(\epsilon)}{\epsilon^{2}}V\left(\frac{x-y_j}{\epsilon}\right)
\end{equation}
converges in strong resolvent sense to $H_{\alpha,Y}$ as $\epsilon$ goes to zero. Moreover:
\begin{equation}\label{neceriso}
 \alpha_j\neq +\infty \mbox{ for some } j \iff -\Delta+V_{j}\mbox{ has a zero energy resonance}.
\end{equation}
\end{proposition}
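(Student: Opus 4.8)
The plan is to prove strong resolvent convergence $(H_\epsilon - z^2)^{-1}\to(H_{\alpha,Y}-z^2)^{-1}$; since it suffices to establish convergence of the resolvents (strongly, or even in operator norm) at a single non-real energy $z^2$ with $z$ in the open upper half-plane off the finite exceptional set, and the second Proposition already identifies the target as a specific rank-$N$ perturbation of $R_0(z):=(H_0-z^2)^{-1}$, the task reduces to reproducing that perturbation in the limit. Because each $V_j$ has finite Rollnik norm, I would factor $V_j=u_jv_j$ with $v_j=|V_j|^{1/2}$ and $u_j=|V_j|^{1/2}\operatorname{sgn}V_j$, and write the resolvent of $H_\epsilon$ through the symmetrized (Konno--Kuroda) identity
\[
(H_\epsilon - z^2)^{-1} = R_0(z) - R_0(z)\,v_\epsilon\,[\,\mathbbm 1 + u_\epsilon R_0(z) v_\epsilon\,]^{-1} u_\epsilon R_0(z),
\]
where $R_0(z)$ has kernel $G_z$ and $v_\epsilon,u_\epsilon$ are the scaled weights $\epsilon^{-1}\lambda_j(\epsilon)^{1/2}v_j((\cdot-y_j)/\epsilon)$, etc. The finite Rollnik norm guarantees that the Birman--Schwinger operator $u_\epsilon R_0(z)v_\epsilon$ is Hilbert--Schmidt, so all the manipulations below are legitimate.

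Next I would rescale. Conjugating the block at $y_j$ by the unitary dilation $U_\epsilon f(x)=\epsilon^{3/2}f(\epsilon x)$ turns $R_0(z)$ into $\epsilon^2 R_0(\epsilon z)$ and converts the Birman--Schwinger operator into an $N\times N$ block operator $\mathcal A_\epsilon(z)$ whose diagonal blocks are $\lambda_j(\epsilon)\,u_j\,G_{\epsilon z}\,v_j$ and whose off-diagonal $(j,l)$ blocks have kernel $\epsilon\,\lambda_j(\epsilon)^{1/2}\lambda_l(\epsilon)^{1/2}\,u_j(\xi)\,G_z\!\big(y_j-y_l+\epsilon(\xi-\xi')\big)\,v_l(\xi')$. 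As $\epsilon\to 0$ the diagonal blocks tend to $A_{0,j}:=u_j G_0 v_j$ (using $\lambda_j(0)=1$), the off-diagonal blocks are $O(\epsilon)$, and a Taylor expansion gives $\mathcal A_\epsilon(z)=\mathcal A_0+\epsilon\,\mathcal A_1(z)+O(\epsilon^2)$, where $\mathcal A_0=\operatorname{diag}(A_{0,j})$ and $\mathcal A_1(z)$ collects the diagonal correction $\lambda_j'(0)A_{0,j}+\tfrac{iz}{4\pi}\,u_j\otimes\overline{v_j}$, coming from $\partial_\epsilon G_{\epsilon z}|_{\epsilon=0}=\tfrac{iz}{4\pi}$, together with off-diagonal rank-one terms $\tilde G_z(y_j-y_l)\,u_j\otimes\overline{v_l}$. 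Note that the genuine spectral parameter $z$ survives only in the off-diagonal pieces and in the outer factors $R_0(z)v_\epsilon$, which concentrate to $G_z(\cdot-y_j)$ times a scalar; this is precisely the source of the $\tilde G_z(y_j-y_l)$ entries and of the $G_z(\cdot-y_j)\otimes\overline{G_z(\cdot-y_k)}$ structure of the limit.

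The decisive and most delicate step is the inversion of $\mathbbm 1+\mathcal A_\epsilon(z)$, which becomes singular in the limit precisely at the resonant centers. If $-\Delta+V_j$ has a zero-energy resonance then $\mathbbm 1+A_{0,j}$ has nontrivial kernel, spanned---after arranging the resonance to be simple in the construction of $V_j$---by a single $\phi_j$ whose resonance function $-G_0 v_j\phi_j$ behaves like $c_j/(4\pi|x|)$ at infinity. I would run a Grushin/Feshbach reduction onto $\bigoplus_j\ker(\mathbbm 1+A_{0,j})$: on the complement $\mathbbm 1+\mathcal A_0$ is boundedly invertible, while on the kernel the inverse develops a $\tfrac{1}{\epsilon}$ singularity whose coefficient is the inverse of the effective matrix $M_{jl}(z)=\langle\phi_j,\mathcal A_1(z)\phi_l\rangle$. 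Using $A_{0,j}\phi_j=-\phi_j$ and normalizing $\phi_j$ through the resonance constant, $M(z)$ is computed to equal $\Gamma_{\alpha,Y}(z)$ (its diagonal $\alpha_j-\tfrac{iz}{4\pi}$ and off-diagonal $-\tilde G_z(y_j-y_l)$ arising respectively from the diagonal correction and the coupling terms above). The $\tfrac{1}{\epsilon}$ blow-up of the middle inverse is then exactly cancelled by the $\epsilon$-decay of the two concentrating outer factors, yielding
\[
-R_0(z)v_\epsilon\,[\mathbbm 1+u_\epsilon R_0(z)v_\epsilon]^{-1}u_\epsilon R_0(z)\;\longrightarrow\;\sum_{j,k=1}^N(\Gamma_{\alpha,Y}(z)^{-1})_{jk}\,G_z(\cdot-y_j)\otimes\overline{G_z(\cdot-y_k)},
\]
which is the resolvent of $H_{\alpha,Y}$. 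The scattering length enters through $\lambda_j'(0)$: I would choose the real-analytic $\lambda_j$ so that its first-order coefficient tunes the diagonal entry of $M(z)$ to the prescribed $\alpha_j\in(-\infty,+\infty)$, which is where real analyticity and $\lambda_j(0)=1$ are used.

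Finally, the biconditional falls out of the same reduction. At a center where $V_j$ is not resonant, $\mathbbm 1+A_{0,j}$ is invertible, that block produces no $\tfrac{1}{\epsilon}$ singularity, and its contribution to the perturbation is $O(\epsilon)$ and hence vanishes: no point interaction survives there, i.e.\ $\alpha_j=+\infty$; conversely a resonance forces a finite $\alpha_j$. One therefore constructs the $V_j$ to be resonant exactly at the indices with $\alpha_j\neq+\infty$ and tunes $\lambda_j$ to fix the value, using the classical fact that there exist potentials of finite Rollnik norm with a simple zero-energy resonance and prescribed resonance constant. I expect the main obstacle to be the singular Grushin inversion---controlling the $O(\epsilon^2)$ remainder uniformly and pinning down the normalization constants so that $M(z)$ coincides with $\Gamma_{\alpha,Y}(z)$ rather than merely being proportional to it block by block---since the whole identification, including the matching of $\alpha_j$ with the scattering length, is decided there.
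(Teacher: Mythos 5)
The paper offers no proof of this proposition: it is imported verbatim from Albeverio--Gesztesy--H\o egh-Krohn--Holden \cite{AGHH} (Theorem II.1.2.1 and its one-center antecedent), so there is nothing in the text to compare against. Your outline is precisely the argument of that reference --- Konno--Kuroda resolvent formula for the factorized Rollnik potential, unitary dilation, expansion of the Birman--Schwinger block operator in $\epsilon$, and the singular inversion of $\mathbbm{1}+\mathcal{A}_\epsilon(z)$ whose $\epsilon^{-1}$ blow-up on the resonance subspace is cancelled by the concentration of the outer factors, producing exactly the rank-$N$ perturbation $\sum_{j,k}(\Gamma_{\alpha,Y}(z)^{-1})_{jk}\,G_z^{y_j}\otimes\overline{G_z^{y_k}}$ --- and the identification of $\alpha_j$ via $\lambda_j'(0)$ and the resonance normalization, together with the equivalence \eqref{neceriso}, falls out of the Grushin reduction exactly as you describe. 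The one step you flag as the ``main obstacle'' (uniform control of the remainder in the singular inversion and the precise normalization of $\phi_j$ so that $M(z)=\Gamma_{\alpha,Y}(z)$ on the nose) is indeed where all the work lies, and is what \cite{AGHH} supplies in detail; note also that for those estimates one typically needs slightly more than bare Rollnik decay on $V_j$, which is harmless here since the statement is existential and you may take the $V_j$ compactly supported. As a roadmap your proposal is correct and coincides with the standard proof.
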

\begin{remark}
Proposition \ref{approris} makes more convincing the idea of considering the Hamiltonian $H_{\alpha,Y}$ as an approximation of more realistic phenomena, governed by very short range interactions.
\end{remark}

\section{Dispersive properties of $H_{\alpha,Y}$}
Since $H_{\alpha}$ is a self adjoint operator, it generates a unitary group of operators $e^{itH_{\alpha,Y}}$; in particular the $L^2$ norm is preserved by the evolution:
\begin{equation}\label{triv_bound}
\Vert e^{itH_{\alpha,Y}}f\Vert_{L^2(\mathbb{R}^3)}=\Vert f\Vert_{L^2(\mathbb{R}^3)}.
\end{equation}
It is natural to investigate the dispersive properties of $e^{itH_{\alpha,Y}}$. The first work in this direction is by D'ancona, Pierfelice and Teta \cite{DPT}, who proved weighted $L^1-L^{\infty}$ estimates
\begin{equation}\label{wei}
\Vert w^{-1}e^{itH_{\alpha,Y}}P_{ac}f\Vert_{L^{\infty}(\mathbb{R}^3)}\lesssim t^{-\frac32}\Vert wf\Vert_{L^1(\mathbb{R}^3)}
\end{equation}
where $P_{ac}$ is the projection onto the absolutely continuous sprectrum of $H_{\alpha,Y}$ and
\begin{equation}
w(x)=\sum_{j=1}^N\left(1+\frac{1}{|x-y_j|}\right),
\end{equation}
under the following assumption:
\begin{hyp}\label{assu}
The matrix $\Gamma_{\alpha,Y}(z)$ is invertible for $z\geq 0$, with locally bounded inverse.
\end{hyp}
It is worth noticing that the presence of a weight in \eqref{wei} is  unavoidable, because of the singularities appearing in the domain of $H_{\alpha,Y}$. In the case of one single point interaction, assumption \ref{assu} is always satisfied except for $\alpha=0$,  in which case the perturbed Hamiltonian has a zero energy resonance. Nevertheless, exploiting the explicit formula for the propagator $e^{itH}$ available in the case $N=1$ ( see \cite{ST} \cite{ABD}), also the case $\alpha=0$  was settle down in \cite{DPT}, by showing weighted dispersive inequality with a slower decay in $t$, a typical phenomenon for Schr\"odinger operators with zero energy resonances:
\begin{equation}\label{ri_so}
\Vert w^{-1}e^{itH_{0,y}}f\Vert_{L^{\infty}(\mathbb{R}^3)}\lesssim t^{-\frac12}\Vert wf\Vert_{L^1(\mathbb{R}^3)}.
\end{equation}
Observe now that, interpolating \eqref{wei} and \eqref{ri_so} with the trivial bound \eqref{triv_bound}, we get weighted dispersive inequlities in the full range $q\in[2,+\infty]$:
\begin{proposition}~
\begin{enumerate}
\item
Under assumption \ref{assu}, the following estimates holds:
\begin{equation}\label{full_fiacco}
\Vert w^{-\left(1-\frac2q\right)}e^{itH_{\alpha,Y}}P_{ac}f\Vert_{L^{q}(\mathbb{R}^3)}\lesssim t^{-\frac32(\frac1p-\frac1q)}\Vert w^{\frac2p-1}f\Vert_{L^p(\mathbb{R}^3)}
\end{equation}
where $q\in[2,+\infty]$ and $p$ is the dual exponent of $q$.
\item In the case $N=1$, $\alpha=0$ we have
\begin{equation}\label{full_ris_fiacco}
\Vert w^{-\left(1-\frac2q\right)}e^{itH_{0,y}}f\Vert_{L^{q}(\mathbb{R}^3)}\lesssim t^{-\frac12\left(\frac1p-\frac1q\right)}\Vert w^{\frac2p-1}f\Vert_{L^p(\mathbb{R}^3)}
\end{equation}
where $q\in[2,+\infty]$ and $p$ is the dual exponent of $q$.
\end{enumerate}
\end{proposition}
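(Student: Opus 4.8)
The statement is a pure interpolation consequence of the endpoint bounds \eqref{wei} (respectively \eqref{ri_so}) and the trivial $L^2$ conservation \eqref{triv_bound}, but since the weights carry a $q$--dependent exponent one cannot simply invoke Riesz--Thorin; the natural device is Stein's interpolation theorem for an analytic family of operators (equivalently, the Stein--Weiss interpolation theorem with change of measure). The plan is to introduce, for $\zeta$ in the closed strip $0\le\operatorname{Re}\zeta\le 1$, the family
\[
T_\zeta := w^{-\zeta}\,e^{itH_{\alpha,Y}}P_{ac}\,w^{-\zeta},
\]
and to read off the desired inequality from the interpolated bound along the real axis.

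First I would establish the two edge estimates. Since $w\ge 1$ everywhere, for $\operatorname{Re}\zeta\ge 0$ one has $|w^{-\zeta}|=w^{-\operatorname{Re}\zeta}\le 1$, so multiplication by $w^{-\zeta}$ is a contraction on every $L^r(\mathbb{R}^3)$; moreover on the left edge $\zeta=iy$ the factor $w^{-iy}=e^{-iy\log w}$ is an isometry of $L^2$. Hence \eqref{triv_bound} gives $\Vert T_{iy}\Vert_{L^2\to L^2}\le 1$ uniformly in $y\in\mathbb{R}$. On the right edge $\zeta=1+iy$ I would write $T_{1+iy}=w^{-iy}\,(w^{-1}e^{itH_{\alpha,Y}}P_{ac}\,w^{-1})\,w^{-iy}$ and use that $w^{-iy}$ is an isometry of $L^\infty$ and of $L^1$ together with \eqref{wei} to obtain $\Vert T_{1+iy}\Vert_{L^1\to L^\infty}\lesssim t^{-3/2}$, again uniformly in $y$.

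Next I would verify that $\zeta\mapsto T_\zeta$ is an admissible analytic family: for simple functions $g,h$ of bounded support the scalar function $\zeta\mapsto\langle T_\zeta g,h\rangle=\langle e^{itH_{\alpha,Y}}P_{ac}(w^{-\zeta}g),\,w^{-\bar\zeta}h\rangle$ is analytic in the interior and continuous and bounded on the strip (the pointwise bound $|w^{-\zeta}|\le 1$ furnishes the required admissible growth). Stein's theorem then yields, for $0\le\theta\le 1$, the bound $\Vert T_\theta\Vert_{L^p\to L^q}\lesssim t^{-3\theta/2}$ with $\tfrac1p=\tfrac{1+\theta}{2}$ and $\tfrac1q=\tfrac{1-\theta}{2}$, so that $\theta=\tfrac1p-\tfrac1q=1-\tfrac2q=\tfrac2p-1$. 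Writing out $T_\theta=w^{-\theta}e^{itH_{\alpha,Y}}P_{ac}w^{-\theta}$ and substituting $f=w^{-\theta}g$ converts this into exactly \eqref{full_fiacco}. For part (2) the argument is verbatim the same, with $e^{itH_{\alpha,Y}}P_{ac}$ replaced by $e^{itH_{0,y}}$ and the right--edge input \eqref{wei} replaced by the resonant endpoint \eqref{ri_so}; this changes only the decay rate on the edge from $t^{-3/2}$ to $t^{-1/2}$, and one obtains $t^{-\theta/2}=t^{-\frac12(\frac1p-\frac1q)}$, i.e. \eqref{full_ris_fiacco}.

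The only genuinely delicate point is the justification of the complex method itself: one must check that the weight raised to a purely imaginary power acts isometrically on the endpoint spaces, so that the boundary bounds are uniform in the imaginary part, and that the family has admissible growth on the strip; both are guaranteed here by the uniform lower bound $w\ge 1$. Everything else — the bookkeeping identifying $\theta$ with $\frac1p-\frac1q$ and the substitution restoring the weighted norms — is routine.
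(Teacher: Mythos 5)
Your proposal is correct and follows exactly the route the paper indicates: interpolating the weighted endpoint bounds \eqref{wei} (resp.\ \eqref{ri_so}) against the $L^2$ conservation \eqref{triv_bound}. The paper dismisses this with the single word ``interpolating''; you merely supply the standard justification (Stein's analytic family / Stein--Weiss with change of measure, using $w\ge 1$ so that $w^{-iy}$ is unimodular), and your bookkeeping $\theta=\tfrac1p-\tfrac1q=1-\tfrac2q=\tfrac2p-1$ and the resulting decay rates match \eqref{full_fiacco} and \eqref{full_ris_fiacco}.
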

However, since the singularities $G_{i}(x-y_j)$ belong to $L^q(\mathbb{R}^3)$ for $q\in[2,3)$, one may hope, at least in principle, to prove an unweighted version of \eqref{full_fiacco} and \eqref{full_ris_fiacco}. This is true indeed, and it is a consequence of a recent result \cite{DMSY}:
\begin{theorem}\label{wavebound}
For any $Y$ and $\alpha$, the wave operators 
\begin{equation}
 W^{\pm}_{\alpha,Y}=s-\lim_{t\rightarrow +\infty}e^{itH_{\alpha ,Y}}e^{it\Delta}
\end{equation}
for the pair $(H_{\alpha,Y},-\Delta)$ exists and are complete on $L^2(\mathbb{R}^3)$, and they are bounded on
 $L^q(\mathbb{R}^3)$ for $1<q<3$.
\end{theorem}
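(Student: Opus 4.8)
The plan is to separate the soft $L^2$ statement from the genuinely analytic $L^q$ statement, and to reduce the latter to a bound for an explicit correction operator built out of the rank-$N$ resolvent identity of the second Proposition. Existence and completeness on $L^2(\mathbb{R}^3)$ are the easy part: since
$$(H_{\alpha,Y}-z^2)^{-1}-(H_0-z^2)^{-1}=\sum_{j,k=1}^N(\Gamma_{\alpha,Y}(z)^{-1})_{jk}\,G_z^{y_j}\otimes\overline{G_z^{y_k}}$$
is a finite-rank, hence trace-class, perturbation of the free resolvent, the Birman--Kuroda theorem immediately yields the existence and completeness of $W^\pm_{\alpha,Y}$, with $\mathrm{Ran}\,W^\pm=\mathrm{Ran}\,P_{ac}$. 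The whole difficulty is therefore the boundedness on $L^q$.

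For the $L^q$ estimate I would pass to the stationary representation of the wave operators. Combining Stone's formula with the resolvent identity above, $W^\pm_{\alpha,Y}$ is realized as the generalized Fourier integral operator associated with the distorted plane waves
$$\varphi^\pm(x,\xi)=e^{ix\cdot\xi}+\sum_{j,k=1}^N\big(\Gamma_{\alpha,Y}(\pm|\xi|)^{-1}\big)_{jk}\,e^{\pm i\,y_k\cdot\xi}\,G^{y_j}_{\pm|\xi|}(x),$$
so that $(W^\pm_{\alpha,Y}f)(x)=(2\pi)^{-3/2}\int_{\mathbb{R}^3}\varphi^\pm(x,\xi)\,\widehat f(\xi)\,d\xi$. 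The leading term reproduces $f$, and one is left to bound the correction $W^\pm_{\alpha,Y}-I=\sum_{j,k}\Omega^\pm_{jk}$, a finite sum of operators each carrying a singular output profile $G^{y_j}_{\pm|\xi|}(x)\sim(4\pi|x-y_j|)^{-1}$ and a radial Fourier multiplier given by the matrix entry $(\Gamma_{\alpha,Y}(\pm|\xi|)^{-1})_{jk}$.

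Next I would analyze a single model operator, reducing by translation to $N=1$ and $y_j=0$, where
$$\Omega^\pm f(x)=\frac{1}{4\pi|x|}\,(2\pi)^{-3/2}\int_{\mathbb{R}^3}m(|\xi|)\,e^{\pm i|\xi|\,|x|}\,\widehat f(\xi)\,d\xi,\qquad m(r)=\Big(\alpha\mp\tfrac{ir}{4\pi}\Big)^{-1}.$$
Passing to polar coordinates $\xi=r\omega$ exposes the structure as a composition of the Fourier restriction--extension to spheres (equivalently a half-wave propagator evaluated on the light cone) with the scalar multiplier $m$. I would split the $r$-integral into a high-energy and a low-energy piece. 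On the high-energy part $m(r)=O(1/r)$, and the oscillatory factor $e^{\pm i r|x|}$ together with the known mapping properties of the spherical means should give $L^q$ boundedness. Here the two endpoints become visible: the upper endpoint $q=3$ is forced by the local integrability threshold of the output profile $|x-y_j|^{-1}$, which belongs to $L^q_{\mathrm{loc}}$ only for $q<3$, while the lower endpoint $q=1$ reflects the singular-integral (Calderón--Zygmund) character of the remaining operator and is most transparent after dualizing.

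The low-energy regime is where I expect the main obstacle. The behaviour of $\Gamma_{\alpha,Y}(r)^{-1}$ as $r\to0^+$ is dictated by the spectral nature of $H_{\alpha,Y}$ at the bottom of the continuous spectrum: generically $\Gamma_{\alpha,Y}(0)$ is invertible and $m$ stays bounded, but when the coupling produces a zero-energy resonance (for $N=1$ this is exactly the case $\alpha=0$, where $m(r)\sim r^{-1}$ as $r\to0$) the multiplier develops a pole at the origin. Controlling this singular low-energy contribution uniformly, and showing that it still yields a bounded operator on $L^q$ over the full range $1<q<3$, is the delicate point. I would handle it by isolating the resonant rank-one part of $\Gamma_{\alpha,Y}(r)^{-1}$ through a Laurent/Puiseux expansion near $r=0$, estimating the resulting operator directly from its explicit kernel, and using again that the output singularity $|x-y_j|^{-1}$ lies in $L^q_{\mathrm{loc}}$ precisely for $q<3$, while treating the regular remainder as a convergent perturbation. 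Once each $\Omega^\pm_{jk}$ is shown to be bounded on $L^q(\mathbb{R}^3)$ for $1<q<3$, summing over the finitely many indices $j,k$ concludes the proof.
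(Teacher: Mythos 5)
First, a point of comparison: the paper does not prove this theorem at all. Theorem~\ref{wavebound} is imported verbatim from the reference \cite{DMSY} (Dell'Antonio--Michelangeli--Scandone--Yajima), and the whole point of the present paper is to give a \emph{different, more elementary} proof of the dispersive consequence (Proposition~\ref{result} for $N=1$) that bypasses the wave operators entirely. So there is no in-paper proof to measure your argument against; the relevant benchmark is the cited preprint. Your outline is in fact recognizably the strategy of that reference: stationary representation of $W^\pm-I$ via the rank-$N$ resolvent formula, reduction to model operators with output profile $|x-y_j|^{-1}$ and a radial multiplier $(\Gamma_{\alpha,Y}(\pm|\xi|)^{-1})_{jk}$, and separate treatment of high and low energies. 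The soft part (Birman--Kuroda applied to the finite-rank resolvent difference, giving existence and completeness on $L^2$) is correct and standard.

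The gap is that everything genuinely hard is asserted rather than proved. The two critical steps are: (i) the actual $L^q$ bound for the model operator $\Omega^\pm$, which composes the restriction--extension to spheres with the singular output weight $|x|^{-1}$ --- this is exactly where the weighted Fourier-transform inequalities of Muckenhoupt type enter (the paper itself flags this in Section~\ref{sec:4}: ``Muckenhaupt estimates \dots play an essential role also in the proof of the $L^p$ boundedness of the wave operators''), and it is also where both endpoints $q=1$ and $q=3$ are genuinely forced by the $A_p$-type conditions on the power weights, not merely by local integrability of $|x|^{-1}$ as you suggest for the upper endpoint and ``Calder\'on--Zygmund character'' for the lower one; and (ii) uniform control of the low-energy singularity of $\Gamma_{\alpha,Y}(r)^{-1}$ in the resonant case (e.g.\ $N=1$, $\alpha=0$, where the multiplier behaves like $r^{-1}$), for which you propose a Laurent expansion but do not show that the resulting rank-one resonant piece is actually $L^q$-bounded for all $1<q<3$ --- this is the delicate point that distinguishes the theorem's claim ``for any $Y$ and $\alpha$'' from the generic non-resonant case, and it cannot be waved through. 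As it stands your text is a credible research program consistent with the literature, not a proof; to make it one you would need to state and verify the precise weighted inequality you invoke (Pitt/Muckenhoupt with weights $|x|^{2-q}$ and $|x|^{2-p}$ on the half-line, as the paper itself uses in Section~\ref{sec:3}) and carry out the low-energy analysis explicitly.
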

\begin{remark}
The restriction $1<q<3$ already emerges at level of approximating Schr\"odinger operators. Indeed, if $H=-\Delta+V$ has a zero energy resonance (which by Proposition \ref{approris} is a necessary condition for $H_{\epsilon}$ to converges to $H$), then the wave operators 
\begin{equation}
W^{\pm}_{V}:=s-\lim_{t\rightarrow +\infty}e^{it(-\Delta+V)}e^{it\Delta}
\end{equation}
are bounded on $L^q$ if and only if $1<q<3$  (see Yajima \cite{Y})
\end{remark}
Owing to Theorem \ref{wavebound} and the intertwining property of wave operators, viz.
\begin{equation}\label{interwining}
 f(H_{{\alpha},Y})P_{ac}H_{{\alpha},Y}= W^{\pm}_{\alpha,Y}f(-\Delta) (W^{\pm}_{\alpha,Y})^{\ast}
\end{equation}
for any Borel function $f$ on $\mathbb{R}^3$, one can lift the classical dispersive estimates for the free Laplacian into analogous estimates for $H_{\alpha,Y}$, albeit for the restriction on the exponent $q$. Thus we find:
\begin{proposition}\label{result}
For any $\alpha$ and $Y$, we have the estimate
\begin{equation}\label{unwei}
 \Vert e^{itH_{\alpha,Y}}P_{ac}f\Vert_{L^{q}(\mathbb{R}^3)}\lesssim t^{-\frac32(\frac1p-\frac1q)}\Vert f\Vert_{L^p(\mathbb{R}^3)}\qquad\mbox{for }q\in[2,3).
\end{equation}
\end{proposition}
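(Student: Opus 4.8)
The plan is to use the factorization of the perturbed propagator afforded by the intertwining identity \eqref{interwining} and then reduce everything to the classical dispersive estimate for the free Schr\"odinger group, paying for the reduction with the $L^q$-boundedness of the wave operators granted by Theorem \ref{wavebound}. First I would specialize \eqref{interwining} to the Borel function $f(\lambda)=e^{it\lambda}$, obtaining
\begin{equation}
e^{itH_{\alpha,Y}}P_{ac}=W^{\pm}_{\alpha,Y}\,e^{-it\Delta}\,(W^{\pm}_{\alpha,Y})^{\ast}.
\end{equation}
This expresses the perturbed flow as the free flow $e^{-it\Delta}$ conjugated by the wave operators, so that all the singular structure carried by $\mathcal{D}(H_{\alpha,Y})$ is absorbed into $W^{\pm}_{\alpha,Y}$ and its adjoint.

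Next, fix $q\in[2,3)$ and let $p\in(3/2,2]$ be its dual exponent. I would estimate in three steps. Since $[2,3)\subset(1,3)$, Theorem \ref{wavebound} gives the boundedness of $W^{\pm}_{\alpha,Y}$ on $L^q$, whence
\begin{equation}
\Vert e^{itH_{\alpha,Y}}P_{ac}f\Vert_{L^q}\lesssim\Vert e^{-it\Delta}(W^{\pm}_{\alpha,Y})^{\ast}f\Vert_{L^q}.
\end{equation}
The classical dispersive estimate for the free group then produces the factor $t^{-\frac32(\frac1p-\frac1q)}$ and replaces the $L^q$-norm by the $L^p$-norm of $(W^{\pm}_{\alpha,Y})^{\ast}f$. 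Finally, I would remove the adjoint wave operator by duality: since $W^{\pm}_{\alpha,Y}$ is bounded on $L^{p'}=L^q$ with $q\in(1,3)$, its $L^2$-adjoint $(W^{\pm}_{\alpha,Y})^{\ast}$ is bounded on $L^p$, so that $\Vert(W^{\pm}_{\alpha,Y})^{\ast}f\Vert_{L^p}\lesssim\Vert f\Vert_{L^p}$, which closes the chain of inequalities and yields \eqref{unwei}.

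The point where a little care is needed — and the only genuinely nontrivial input beyond Theorem \ref{wavebound} — is matching the ranges of exponents so that both the direct and the dual applications of the wave-operator bounds are legitimate. The direct step requires $q<3$, while the duality step requires $p'=q<3$ as well; both reduce to the single constraint $q<3$, which is exactly the restriction in the statement. The lower endpoint $q\geq 2$ (equivalently $p\leq 2$) guarantees $p>\tfrac32>1$, so on neither side do we leave the admissible interval $(1,3)$, and the free dispersive estimate is available for the dual pair $(p,q)$. Everything else is the routine composition of bounded operators, and in particular no further analysis of the explicit kernel of $e^{itH_{\alpha,Y}}$ is required.
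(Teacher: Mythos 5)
Your argument is correct, and it is in fact exactly how the paper itself justifies Proposition \ref{result} in full generality: specialize the intertwining identity \eqref{interwining} to $f(\lambda)=e^{it\lambda}$, use the $L^q$-boundedness of $W^{\pm}_{\alpha,Y}$ for $q\in[2,3)\subset(1,3)$ from Theorem \ref{wavebound}, apply the free dispersive estimate, and remove $(W^{\pm}_{\alpha,Y})^{\ast}$ by duality since $p'=q$ also lies in $(1,3)$. Your bookkeeping of exponents is right. However, this is precisely the route the authors set out to \emph{avoid}: Theorem \ref{wavebound} is the deep input imported from \cite{DMSY}, and the paper's actual contribution is a self-contained proof for $N=1$ that uses no scattering theory. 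That proof runs differently: decompose $L^2(\mathbb{R}^3)$ into angular momentum sectors, observe that $H_{\alpha}$ coincides with $-\Delta$ off the radial subspace (Lemma \ref{orto}), reduce to radial $f$ (Corollary \ref{radial}), insert the explicit Scarlatti--Teta formula \eqref{propagatore} for the propagator, and control the resulting one-dimensional oscillatory integrals with Pitt's weighted Fourier inequality (Theorem \ref{Pittineq}), where the condition $q<3$ reappears as the admissibility constraint on the power weights $r^{2-q}$ and $r^{2-p}$. The comparison is a trade-off: your wave-operator argument covers arbitrary $N$ but rests on a hard external theorem, while the paper's direct argument is elementary, restricted to $N=1$, and — because Pitt's inequality generalizes to broader classes of weights — is the one the authors hope to push into the regime $q\geq 3$ where the wave operators are no longer $L^q$-bounded.
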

Interpolating \eqref{unwei} respectively with \eqref{wei} and \eqref{ri_so}, we deduce also the following:
\begin{corollary}\label{almootti}
 \begin{enumerate}
  \item Under assumption \ref{assu}, we have
\begin{equation}\label{full}
 \Vert w^{-\left(1-\frac{3-\epsilon}{q}\right)}e^{itH_{\alpha,Y}}P_{ac}f\Vert_{L^{q}(\mathbb{R}^3)}\lesssim t^{-\frac32(\frac1p-\frac1q)}\Vert w^{\left(1-\frac{3-\epsilon}{q}\right)}f\Vert_{L^p(\mathbb{R}^3)}
\end{equation}
in the regime $q\in[3,+\infty]$.
\item When $N=1$ and $\alpha=0$, we have
\begin{equation}\label{full_ris}
\Vert w^{-\left(1-\frac{3-\varepsilon}{q}\right)}e^{itH_{0,y}}f\Vert_{L^{q}(\mathbb{R}^3)}\lesssim t^{-\frac12+\frac{\epsilon}{q}}\Vert w^{\left(1-\frac{3-\epsilon}{q}\right)}f\Vert_{L^p(\mathbb{R}^3)}
\end{equation}
in the regime $q\in[3,+\infty]$.
\end{enumerate}
\end{corollary}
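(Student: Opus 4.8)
The plan is to derive both inequalities by interpolating the unweighted bound \eqref{unwei} against the weighted $L^1$–$L^\infty$ estimates \eqref{wei} and \eqref{ri_so}, regarding $e^{itH_{\alpha,Y}}P_{ac}$ (for each fixed $t$) as a single linear operator and tracking the $t$-dependence of the constant through the interpolation. Since the spaces involved are weighted Lebesgue spaces with the genuine, non-power weight $w$, the natural tool is the Stein–Weiss interpolation theorem with change of measure, equivalently complex interpolation of the scale of weighted Lebesgue spaces: writing $\tilde L^p(v)$ for the space with norm $\Vert v f\Vert_{L^p}$, one has $[\tilde L^{p_0}(v_0),\tilde L^{p_1}(v_1)]_\theta=\tilde L^{p}(v_0^{1-\theta}v_1^{\theta})$ with $1/p=(1-\theta)/p_0+\theta/p_1$, and the interpolated operator norm is bounded by the product of the endpoint norms raised to the powers $1-\theta$ and $\theta$.

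For part (1), fix $\epsilon\in(0,1)$ and take as first endpoint the unweighted bound \eqref{unwei} at $q_0:=3-\epsilon$ (with dual $p_0=q_0'$), which carries trivial weights $v_0^{\mathrm{in}}=v_0^{\mathrm{out}}=1$ and constant $t^{-\frac32(1/p_0-1/q_0)}$; as second endpoint take \eqref{wei}, that is the map $\tilde L^{1}(w)\to\tilde L^{\infty}(w^{-1})$ with input weight $v_1^{\mathrm{in}}=w$, output weight $v_1^{\mathrm{out}}=w^{-1}$ and constant $t^{-3/2}$. Interpolating with $\theta$ fixed by $1/q=(1-\theta)/q_0$ gives $1-\theta=q_0/q$ and $\theta=1-\frac{3-\epsilon}{q}$, whence the output weight is $w^{-\theta}=w^{-(1-\frac{3-\epsilon}{q})}$ and the input weight is $w^{\theta}=w^{1-\frac{3-\epsilon}{q}}$, exactly matching \eqref{full}. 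A short computation shows $1/p+1/q=1$ (so $p,q$ are dual) and $1/p-1/q=(1-\theta)(1/p_0-1/q_0)+\theta$, so the interpolated power equals $t^{-[(1-\theta)\frac32(1/p_0-1/q_0)+\theta\frac32]}=t^{-\frac32(1/p-1/q)}$; this proves \eqref{full} for all $q\in[3,\infty]$ (as $\theta$ ranges over $[\epsilon/3,1]$), the endpoint $q=\infty$ recovering \eqref{wei}.

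For part (2) the argument is identical, using \eqref{ri_so} as the second endpoint. Here $N=1$, $\alpha=0$ produces a zero-energy resonance but no negative eigenvalue, so $P_{ac}=I$ and the projection disappears, making \eqref{unwei} and \eqref{ri_so} estimates for the same operator. The only change is the endpoint decay $t^{-1/2}$ in place of $t^{-3/2}$, so the interpolated power becomes $(1-\theta)\frac32(1-\tfrac{2}{q_0})+\theta\frac12$ with $q_0=3-\epsilon$ and $1-\theta=\frac{3-\epsilon}{q}$; carrying out the arithmetic collapses this to $\frac12-\frac{\epsilon}{q}$, giving the decay $t^{-\frac12+\frac{\epsilon}{q}}$ of \eqref{full_ris}, while the weights are computed exactly as before.

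The argument is essentially bookkeeping, so I do not anticipate a genuine obstacle; the two points deserving care are (a) confirming that $w$, which blows up like $|x-y_j|^{-1}$ near each $y_j$, is admissible for the interpolation — this is immediate, since the complex-interpolation identification of weighted Lebesgue spaces holds for an arbitrary positive measurable weight, with no Muckenhoupt $A_p$ condition needed — and (b) the fact that the strict bound $q_0<3$ in \eqref{unwei} forbids the choice $\epsilon=0$, which is precisely why the endpoint $q=3$ is only reached up to the $\epsilon$-loss recorded in the weight exponents and the time decay, with the implicit constants degenerating as $\epsilon\to0^+$.
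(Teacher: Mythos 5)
Your proposal is correct and follows exactly the route the paper indicates: interpolating the unweighted estimate \eqref{unwei} (taken at an exponent $q_0=3-\epsilon<3$) against the weighted endpoint estimates \eqref{wei} and \eqref{ri_so} via Stein--Weiss interpolation with change of measure, and your bookkeeping of the weights and of the time-decay exponents $t^{-\frac32(\frac1p-\frac1q)}$ and $t^{-\frac12+\frac{\epsilon}{q}}$ checks out. The paper gives no further detail beyond the word ``interpolating,'' so your write-up simply fills in the same computation.
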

We can see that the results in \cite{DMSY} improves the one in \cite{DPT} in various ways:
\begin{enumerate}
 \item In the regime $q\in[2,3)$, with an arbitrary number of centers, both the weights and the hypothesis \ref{assu} on $\Gamma$ are removed.
\item In the regime $q\in[3,+\infty]$, with an arbitrary number of centers and under the hypothesis \ref{assu}, the weights are strengthened to be almost optimal (indeed we can not remove $\epsilon$ in estimate \eqref{full}).
\item In the case $N=1$, $\alpha=0$ and in the regime $q\in[2,3)$, weights are removed and the time decay is strengthened.
\item In the case $N=1$, $\alpha=0$ and in the regime $q\in[3,+\infty]$, both the weights and the time decay are strengthened, and again the weights are almost optimal.
\end{enumerate}
In this work we want to provide a new and simpler proof of Proposition \ref{result} in the particular case $N=1$, without using any information about the wave operators. 
\section{Proof of Proposition \ref{result}, case $N=1$}
\label{sec:3}
The operators $H_{\alpha,y_1}$ and $H_{\alpha,y_2}$ are conjugated by translations, hence we can assume $y=0$ and we will simply write $H_{\alpha}$ instead of $H_{\alpha,0}$. We recall an useful factorization for the operator $H_{\alpha}$ (see \cite{AGHH}). Introducing spherical coordinates on $\mathbb{R}^3$, we can decompose $L^2(\mathbb{R}^3)$ with respect to angular momenta:
\begin{equation}
 L^2(\mathbb{R}^3)=L^2(\mathbb{R}^+,r^2dr)\otimes L^2(S^2)
\end{equation}
where $S^2$ is the unit sphere in $\mathbb{R}^3$. Moreover, using the unitary transformation
\begin{equation}
 U:L^2((0,+\infty),r^2dr)\rightarrow L^2(\mathbb{R}^+,rdr),\quad (Uf)(r)=rf(r)
\end{equation}
and decomposing $L^2(S^2)$ into spherical harmonics 
\begin{equation}
\left\{Y_{l,m}l\in\mathbb{N},m=0,\pm 1,\ldots, \pm l\right\},
\end{equation}
we obtain
\begin{equation}\label{finde}
L^2(\mathbb{R}^3)=\bigoplus_{l=0}^{+\infty}U^{-1}L^2(\mathbb{R}^+,rdr)\otimes \langle Y_{l,-l},\ldots ,Y_{l,l}\rangle.
\end{equation}
With respect to this decomposition, the symmetric operator $\tilde{H}:=\tilde{H}_{\{0\}}$ writes as
\begin{equation}
\tilde{H}=\bigoplus_{l=0}^{+\infty}U^{-1}h_lU\otimes 1
\end{equation}
where $h_l$, $l\geq 0$ are symmetric operators on $L^2(\mathbb{R}^+)$, with common domain $\mathcal{C}_0^{+\infty}(\mathbb{R}^+)$ and actions given by
\begin{equation}
h_l=-\frac{d^2}{dr^2}+\frac{l(l+1)}{r^2},\quad r>0.
\end{equation}
 It is well known \cite{RS} that $h_l$ are essentially self adjoint for $l\geq 1$, while $h_0$ admits a one parameter family of selfadjoint extension $\dot{h}_{0,\alpha}$ such that
\begin{equation}\label{de:se}
H_{\alpha}=\dot{h}_{0,\alpha}\oplus\bigoplus_{l=1}^{+\infty}U^{-1}\dot{h}_lU\otimes 1
\end{equation}
where $\dot{h}_l$ is the unique self adjoint extension of $h_l$, for $l\geq 1$. Identity \eqref{de:se} tells us that $H_{\alpha}$ completely diagonalizes with respect to decomposition \eqref{finde}, and that it coincides with $-\Delta$ after projecting out the subspace of radial functions. Hence it immediately follows
\begin{lemma}\label{orto}
Suppose $f\in L^2(\mathbb{R}^3)$ is orthogonal to the subspace of radial functions. Then
\begin{equation}
e^{itH_{\alpha}}f=e^{-it\Delta}f
\end{equation}
\end{lemma}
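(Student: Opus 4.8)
The plan is to read off the claim directly from the angular-momentum diagonalization \eqref{de:se}, so that no analytic estimate is required at all. First I would pin down the radial subspace inside the decomposition \eqref{finde}: since the lowest spherical harmonic $Y_{0,0}$ is constant on $S^2$, the radial (spherically symmetric) functions are precisely the $l=0$ summand $U^{-1}L^2(\mathbb{R}^+,r\,dr)\otimes\langle Y_{0,0}\rangle$. Consequently the hypothesis that $f$ is orthogonal to the radial subspace means exactly that $f$ lies in
\begin{equation}
\mathcal{H}_{\perp}:=\bigoplus_{l=1}^{+\infty}U^{-1}L^2(\mathbb{R}^+,r\,dr)\otimes\langle Y_{l,-l},\ldots,Y_{l,l}\rangle .
\end{equation}

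Next I would compare $H_\alpha$ with the free Laplacian through their common decomposition. The operator $-\Delta$ is itself a self adjoint extension of $\tilde H$, corresponding to the choice $\alpha=+\infty$, and it diagonalizes exactly as in \eqref{de:se} but with $\dot h_{0,\alpha}$ replaced by the free extension $\dot h_0$ in the $l=0$ block; on every block with $l\geq 1$ it acts through the very same self adjoint operator $U^{-1}\dot h_l U\otimes 1$. Hence $H_\alpha$ and $-\Delta$ differ only in the radial block and agree verbatim on $\mathcal{H}_{\perp}$. In particular $\mathcal{H}_{\perp}$ is a reducing subspace for both operators, being an orthogonal direct sum of their spectral blocks, and the restrictions $H_\alpha\big|_{\mathcal{H}_{\perp}}$ and $(-\Delta)\big|_{\mathcal{H}_{\perp}}$ coincide as self adjoint operators on $\mathcal{H}_{\perp}$.

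Finally I would invoke the fact that for an orthogonal direct sum of self adjoint operators the functional calculus is computed blockwise, so the unitary groups leave each block invariant and act there as the exponential of the corresponding block operator. Applying this to $f\in\mathcal{H}_{\perp}$ gives
\begin{equation}
e^{itH_\alpha}f=e^{it\,H_\alpha|_{\mathcal{H}_{\perp}}}f=e^{it\,(-\Delta)|_{\mathcal{H}_{\perp}}}f=e^{-it\Delta}f,
\end{equation}
which is the assertion (recall $-\Delta=H_0$, so $e^{-it\Delta}=e^{itH_0}$). The only point requiring a word of justification is that the Hilbert-space decomposition into blocks genuinely reduces the unitary group $e^{itH_\alpha}$; this is immediate from \eqref{de:se}, since that identity exhibits $H_\alpha$ as an orthogonal direct sum of self adjoint operators, and I expect it to be the sole (mild) technical step rather than a real obstacle.
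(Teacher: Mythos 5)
Your argument is correct and is exactly the one the paper intends: Lemma \ref{orto} is presented there as an immediate consequence of the block decomposition \eqref{de:se}, and you have simply spelled out the routine verification that the $l\geq 1$ blocks reduce both unitary groups and that $H_\alpha$ and $-\Delta$ coincide on them. One notational caveat: in this paper $H_0$ denotes the point interaction $H_{0,y}$ with $\alpha=0$ (a nontrivial, resonant perturbation of the Laplacian), while the free Laplacian corresponds to $\alpha=+\infty$, so your closing parenthetical ``recall $-\Delta=H_0$'' should be deleted; the rest of the chain of equalities stands as written.
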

Lemma \ref{orto} has an important Corollary, which considerably simplifyes our proof:
\begin{corollary}\label{radial}
In the proof of Proposition \ref{result} (in the special case $N=1$) we can suppose $f$ to be radial.
\end{corollary}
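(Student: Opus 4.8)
The plan is to split an arbitrary datum $f$ into its radial part and a part lying in the orthogonal complement of the radial subspace, dispose of the latter immediately via Lemma \ref{orto}, and thereby reduce \eqref{unwei} to radial data. Concretely, I would introduce the spherical averaging operator
\begin{equation}
(P_{\mathrm{rad}}f)(x):=\frac{1}{4\pi}\int_{S^2}f(|x|\omega)\,d\omega ,
\end{equation}
which is precisely the orthogonal projection of $L^2(\mathbb{R}^3)$ onto the radial functions, i.e.\ onto the $l=0$ sector in \eqref{finde}. Writing $f=f_{\mathrm{rad}}+f_{\perp}$ with $f_{\mathrm{rad}}:=P_{\mathrm{rad}}f$ and $f_{\perp}:=(1-P_{\mathrm{rad}})f$, the function $f_{\perp}$ is by construction orthogonal to every radial function.

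The first key step is to verify that $P_{\mathrm{rad}}$ is bounded on $L^p(\mathbb{R}^3)$ for every $p\in[1,\infty]$, so that the splitting does not deteriorate the norms in \eqref{unwei}. For $p<\infty$ this is just Jensen's inequality applied to the probability average $\tfrac{1}{4\pi}\int_{S^2}d\omega$: passing to spherical coordinates,
\begin{equation}
\|P_{\mathrm{rad}}f\|_{L^p}^p=4\pi\int_0^{\infty}\Big|\tfrac{1}{4\pi}\int_{S^2}f(r\omega)\,d\omega\Big|^p r^2\,dr\le\int_0^{\infty}\!\!\int_{S^2}|f(r\omega)|^p r^2\,d\omega\,dr=\|f\|_{L^p}^p ,
\end{equation}
while the endpoint $p=\infty$ is immediate. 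Hence $P_{\mathrm{rad}}$ is in fact a contraction on every $L^p$, and consequently $\|f_{\mathrm{rad}}\|_{L^p}+\|f_{\perp}\|_{L^p}\lesssim\|f\|_{L^p}$.

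Next I would dispose of the orthogonal part. Since $H_{\alpha}$ diagonalizes with respect to \eqref{finde} and coincides with $-\Delta$ on all sectors $l\ge 1$ by \eqref{de:se}, its (at most one, for $N=1$) negative eigenvalues—hence all of its bound states—live in the radial sector. Therefore $P_{\mathrm{rad}}$ commutes with $P_{ac}$; in particular $f_{\perp}$, being orthogonal to every radial function, is orthogonal to every bound state, so $P_{ac}f_{\perp}=f_{\perp}$. Lemma \ref{orto} then gives $e^{itH_{\alpha}}P_{ac}f_{\perp}=e^{itH_{\alpha}}f_{\perp}=e^{-it\Delta}f_{\perp}$, and the classical dispersive estimate for the free Schr\"odinger group yields
\begin{equation}
\|e^{itH_{\alpha}}P_{ac}f_{\perp}\|_{L^q}=\|e^{-it\Delta}f_{\perp}\|_{L^q}\lesssim t^{-\frac32(\frac1p-\frac1q)}\|f_{\perp}\|_{L^p}\lesssim t^{-\frac32(\frac1p-\frac1q)}\|f\|_{L^p}
\end{equation}
for every $q\in[2,\infty]$, in particular on the relevant range $q\in[2,3)$.

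Finally, because $P_{ac}$ preserves radiality, $P_{ac}f_{\mathrm{rad}}$ is again radial, and by the triangle inequality together with the $L^p$-bound $\|f_{\mathrm{rad}}\|_{L^p}\lesssim\|f\|_{L^p}$ it remains only to establish \eqref{unwei} for the radial datum $f_{\mathrm{rad}}$; this is exactly the assertion of the corollary. I do not expect any genuine obstacle in this reduction: the only points needing care are the $L^p$-boundedness of the spherical average (settled by Jensen) and the fact that the bound states are radial (immediate from \eqref{de:se}). The real work—proving the dispersive bound for radial $f$—is precisely what is deferred to the remainder of the argument, where the explicit radial kernel of $e^{itH_{\alpha}}$ has to be analyzed.
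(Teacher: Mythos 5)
Your argument is essentially identical to the paper's: the same decomposition of $f$ into its spherical average (the orthogonal projection onto radial functions) plus the orthogonal remainder, the same use of Lemma \ref{orto} and the free dispersive estimate on the remainder, and the same Jensen/H\"older bound showing the spherical average is a contraction on $L^p$. Your additional observation that the bound states are radial, so that $P_{ac}$ commutes with the radial projection, is a point the paper leaves implicit; it is correct and slightly tightens the reduction.
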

\begin{proof}
Suppose \eqref{unwei} to be true for radial functions. Given a generic $f\in L^2(\mathbb{R}^3)$, we can decompose it as $f_1+f_2$, where
\begin{equation}
f_1:=\frac{4\pi}{|y|^2}\int_{S_y}f(r,\omega)d\mathcal{H}^2(\omega)
\end{equation}
is the orthogonal projection onto $L^2_{rad}(\mathbb{R}^3)$. By Lemma \ref{orto}, we get
\begin{equation}
 e^{itH_{\alpha}}f=e^{itH_{\alpha}}f_1+e^{-it\Delta}f_2.
\end{equation}
By hypothesis and using the dispersive estimates for the free Laplacian, we deduce
\begin{equation}
\Vert e^{itH_{\alpha}}f \Vert_{L^q}\leq t^{\frac32\left(\frac 1p-\frac 1q\right)}(\Vert f_1\Vert_{L^p}+\Vert f_2\Vert_{L^p}).
\end{equation}
Now, using H\"older inequality, we get
\begin{align*}
\Vert f_1\Vert_{L^p}^p&\leq\int_0^{+\infty}r^{2-2p}\left(\int_{S_r}|f(r,\omega)|d\mathcal{H}^2(\omega)\right)^pdr\\
&\leq \int_0^{+\infty}r^{2-2p+\frac{2p}{q}}\int_{S_r}|f(r,\omega)|^pd\mathcal{H}^2(\omega)dr=\\
&\int_0^{+\infty}\int_{S_r}||f(r,\omega)|d\mathcal{H}^2(\omega)|^p=\Vert f\Vert_{L^p}^p
\end{align*}
and consequently
\begin{equation}
\Vert f_2\Vert_{L^p}\leq \Vert f\Vert_{L^p}+\Vert f_1\Vert_{L^p}\lesssim \Vert f\Vert_{L^p}
\end{equation}
which concludes the proof.
\end{proof}
Now we  are in turn to prove our main result. As mentioned before, in the case $N=1$, the propagator associated to $H_{\alpha}$ is explicitly known. In particular, Scarlatti and Teta \cite{ST} have proved the following characterization:
\begin{equation}\label{propagatore}
e^{itH_{\alpha}}f=\begin{cases}
e^{-it\Delta}f+ \lim_{R\rightarrow\infty}M_R f & \mbox{ if } \alpha=0  \\
e^{itH_{0}}f+ \lim_{R\rightarrow\infty}M_{\alpha,R} f & \mbox{ if } \alpha>0 \\
e^{itH_{0}}f+ \lim_{R\rightarrow\infty}\widetilde{M}_{\alpha,R} f & \mbox{ if } \alpha<0  \\
\end{cases}
\end{equation}
where the limit is taken in the $L^2$ sense and 
\begin{equation}
M_Rf(x):=(4\pi it)^{-{1/2}}\frac{1}{4\pi|x|}\int_{B_R}\frac{\widetilde{f}(|y|)}{|y|}e^{-i\frac{(|x|+|y|)^2}{4t}}dy,
\end{equation}
\begin{equation}
M_{\alpha,R}f(x):=-(4\pi it)^{-1/2}\frac{\alpha}{|x|}\int_{\mathbb{R}^3}\frac{f(y)}{|y|}\int_{0}^{+\infty}e^{-4\pi\alpha s}e^{-i\frac{(|x|+|y|+s)^2}{4t}}dsdy,
\end{equation}
\begin{equation}\label{alphaneg}
\begin{aligned}
\widetilde{M}_{\alpha,R}&f(x):=\Bigg(-\psi_{\alpha}(x)\int_{B_R}\psi_{\alpha}(y)f(y)e^{it(4\pi\alpha)^2}dy\\
&-\frac{\alpha}{|x|}(it\pi)^{-1/2}\int_{B_R}\frac{f(y)}{|y|}\int_0^{+\infty}e^{4\pi\alpha s}\exp\left(-\frac{(u-|x|-|y|)^2}{4it}\right)dsdy\Bigg),
\end{aligned}\end{equation}
and $\psi_{\alpha}(x)=\sqrt{-2\alpha}\frac{e^{4\pi\alpha |x|}}{|x|}$ is the normalized eigenfunction associated to the negative eigenvalue $-(4\pi\alpha)^2$ for $\alpha<0$. 
We are going to show that the following estimates hold uniformly in $R>0$:
\begin{align}
\label{riso}&\|M_{R}f\|_{L^q}\lesssim t^{-\frac{3}{2}(\frac1p-\frac1q)}\|f\|_{L^p},\\
\label{nonriso}&\|M_{\alpha,R}f\|_{L^q}\lesssim t^{-\frac{3}{2}(\frac1p-\frac1q)}\|f\|_{L^p},\\
\label{nonrisoneg} &\| \widetilde{M}_{\alpha,R}P_{ac}f\|_{L^q}\lesssim t^{-\frac{3}{2}(\frac1p-\frac1q)}\|f\|_{L^p}.
\end{align}
The latter inequalities are clearly sufficient to prove Proposition \ref{result} in the special case $N=1$.
Let us start by proving inequality \eqref{riso}. Thanks to Corollary \ref{radial} we can suppose $f(y)=\widetilde{f}(|y|)$ for some $\widetilde{f}:\mathbb{R}\rightarrow\mathbb{R}$. Using spherical coordinates in both variables $x$ and $y$ we get
\begin{equation}
\|M_{R}f\|_{L^q}\lesssim t^{-1/2}\left[\int_0^{+\infty}r^{2-q}\left|\int_0^R \exp\left(-i\frac{\rho r}{2t}-i\frac{\rho^2}{4t}\right)\rho \widetilde{f}(\rho)d\rho\right|^qdr\right]^{1/q}.
\end{equation}
Setting
\begin{equation}
h(\rho):=\left\{\begin{array}{cl}
e^{-i\rho^2/4t}\rho\widetilde{f}(\rho)\,\,\,\, & 0\leq\rho\leq R \\
0 & \rho\in\mathbb{R}\setminus[0,R] \\
\end{array}\right.
\end{equation}
the latter expression becomes
\begin{equation}
t^{-1/2}\left[\int_0^{+\infty}r^{2-q}\left|\widehat{h}\left(\frac{r}{2t}\right)\right|^qdr\right]^{1/q},
\end{equation}
which is equal to
\begin{equation}
t^{-\frac32(\frac1p-\frac1q)}\left[\int_0^{+\infty}r^{2-q}|\widehat{h}(r)|^qdr\right]^{1/q}.
\end{equation}
At this point we are ready to use a classical weighted Fourier transform norm inequality, also known in literature as Pitt's inequality. We state here the original theorem proved by Pitt in 1937 \cite{P}:
\begin{theorem}\label{Pittineq}[Pitt's theorem]
Let $1<\gamma\leq \eta<\infty$, choose $0<b<\frac{1}{\gamma'}$ with $\frac{1}{\gamma}+\frac{1}{\gamma'}=1$, set $\beta=1-\frac{1}{\gamma}-\frac{1}{\eta}-b<0$ and define $v(x)=|x|^{b\gamma}$ for all $x\in\mathbb{R}$. There is a constant $C>0$ such that
\begin{equation}
\left(\int_{\mathbb{R}}|\widehat{f}(\xi)|^{\eta}|\xi|^{\beta \eta}d\xi\right)^{1/{\eta}}\leq C\left(\int_{\mathbb{R}}|f(x)|^{\gamma}|x|^{b\gamma}dx\right)^{1/\gamma},
\end{equation}
for all $f\in L^{\gamma}_v(\mathbb{R})$.
\end{theorem}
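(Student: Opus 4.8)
The plan is to read the inequality as the statement that $\mathcal{F}$ maps $L^{\gamma}(\mathbb{R},|x|^{b\gamma}dx)$ boundedly into $L^{\eta}(\mathbb{R},|\xi|^{\beta\eta}d\xi)$, and to build the proof around the fact that this mapping property is \emph{exactly} dilation invariant. Replacing $f$ by $f(\lambda\,\cdot)$ multiplies the left-hand side by $\lambda^{\beta+\frac1\eta-1}$ and the right-hand side by $\lambda^{-b-\frac1\gamma}$, and the defining relation $\beta=1-\frac1\gamma-\frac1\eta-b$ is precisely the identity $\beta+\frac1\eta-1=-b-\frac1\gamma$ that makes these two powers agree. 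This scaling rigidity is the backbone of the argument: it forces every auxiliary estimate to be homogeneous of the correct degree, and it strongly suggests decomposing both $f$ and $\widehat{f}$ into dyadic scales.

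Before the general case I would isolate the two endpoints that bracket the statement. At $\gamma=\eta=2$ the relation forces $\beta=-b$, so the inequality reduces to the classical bound $\||\xi|^{-b}\widehat{f}\|_{L^2}\lesssim\||x|^{b}f\|_{L^2}$ for $0<b<\tfrac12$; being purely dilation invariant, this is diagonalised by Plancherel together with the Mellin transform into a bounded multiplier, the constraint $0<b<\tfrac12$ being exactly what keeps the Mellin symbol finite. At the opposite corner $b=0$ collapses the relation to $\beta=0$ and $\eta=\gamma'$, i.e. the Hausdorff--Young inequality $\|\widehat{f}\|_{L^{\gamma'}}\lesssim\|f\|_{L^{\gamma}}$ for $1<\gamma\le2$. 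One could then try to recover the full statement by interpolation with change of measure between these corners, but more robustly for the entire range $1<\gamma\le\eta<\infty$ I would run the dyadic scheme described below.

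For the dyadic scheme I would write $f=\sum_j f_j$, with $f_j$ the restriction of $f$ to $\{2^j\le|x|<2^{j+1}\}$, split the norm $\|\,|\xi|^{\beta}\widehat{f}\,\|_{L^{\eta}}$ over the frequency annuli $\{2^k\le|\xi|<2^{k+1}\}$, and estimate each block $\|\widehat{f_j}\|_{L^{\eta}(|\xi|\sim 2^k)}$ by two competing bounds: the trivial $|\widehat{f_j}|\le\|f_j\|_{L^1}$, controlled through H\"older on the support of $f_j$, which is efficient when $2^{j+k}\lesssim1$; and a Hausdorff--Young estimate combined with H\"older in the weight, which is efficient when $2^{j+k}\gtrsim1$. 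By the exact homogeneity each block contributes a factor depending only on the combination $j+k$, so that, after inserting the weight powers $2^{jb}$ and $2^{k\beta}$, the full norm is dominated by $\big(\sum_k(\sum_j c_{j+k}A_j)^{\eta}\big)^{1/\eta}$, a discrete convolution of the weighted block norms $A_j=\||x|^{b}f_j\|_{L^{\gamma}}$, which Young's (or Schur's) inequality closes provided $(c_m)_m\in\ell^1$.

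The main obstacle, as always for weighted Fourier inequalities, is precisely the summability of the off-diagonal kernel $(c_m)$, that is, the interaction between widely separated spatial and frequency scales. Unlike the Riesz potentials governing the Stein--Weiss inequality, the Fourier kernel $e^{-ix\xi}$ has no nonnegative majorant, so the decay of $c_m$ as $m\to+\infty$ cannot come from size alone and must be extracted from the oscillation, via Hausdorff--Young in an averaged sense or integration by parts after smoothing the annular cut-offs; the decay as $m\to-\infty$ comes from the trivial bound. The two strict constraints $0<b<\tfrac1{\gamma'}$ and $\beta<0$ are exactly what make both tails of $(c_m)$ geometrically decaying, whereas at the endpoints they degenerate into a constant tail and the series diverges logarithmically, which is the structural reason Pitt's inequality fails there. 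I expect establishing this geometric off-diagonal decay, with the weights correctly bookkept, to be the technical heart of the proof.
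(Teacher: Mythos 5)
The paper does not prove this statement: Theorem \ref{Pittineq} is quoted from Pitt's 1937 paper \cite{P} and used as a black box, so there is no internal proof to compare yours against. Judged on its own terms, your dyadic scheme is a legitimate and essentially standard route to Pitt's inequality (close in spirit to the Muckenhoupt/Benedetto--Heinig treatments the paper mentions in its conclusions); your scaling computation is correct, and your identification of the strict inequalities $b<1/\gamma'$ and $\beta<0$ as the source of the two geometric tails is exactly right.

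Two points on the step you defer as the ``technical heart''. First, it is easier than you suggest: no smoothing of cut-offs or integration by parts is needed. Writing $A_j=\||x|^b f_j\|_{L^\gamma}$ and using $\beta+\tfrac1\eta=\tfrac1{\gamma'}-b$, the trivial bound $|\widehat{f_j}|\le\|f_j\|_{L^1}\lesssim 2^{j(1/\gamma'-b)}A_j$ gives the block estimate $2^{k\beta}\|\widehat{f_j}\|_{L^\eta(|\xi|\sim 2^k)}\lesssim 2^{(j+k)(1/\gamma'-b)}A_j$, geometrically small for $j+k\to-\infty$; Hausdorff--Young on the single piece, $\|\widehat{f_j}\|_{L^{\gamma'}}\le\|f_j\|_{L^\gamma}\lesssim 2^{-jb}A_j$, followed by H\"older on the frequency annulus gives $2^{-(j+k)b}A_j$, geometrically small for $j+k\to+\infty$; Young's convolution inequality $\ell^1\ast\ell^\eta\to\ell^\eta$ together with $\ell^\gamma\hookrightarrow\ell^\eta$ (this is where $\gamma\le\eta$ enters) then closes the argument. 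Second, the one genuine gap in your description: the ``Hausdorff--Young combined with H\"older'' branch only works as written when $\eta\le\gamma'$, because H\"older on a finite-measure annulus can only lower the exponent. The hypotheses do permit $\eta>\gamma'$ (e.g.\ $\gamma=2$, $b$ small, $\eta$ large), and there you must instead interpolate, $\|\widehat{f_j}\|_{L^\eta}\le\|\widehat{f_j}\|_{L^{\gamma'}}^{\gamma'/\eta}\|\widehat{f_j}\|_{L^\infty}^{1-\gamma'/\eta}$, which yields the block bound $2^{(j+k)\beta}A_j$ and again a geometric tail precisely because $\beta<0$. (For the paper's application only $\eta=q=\gamma'$ is used, so this case is invisible there.) Relatedly, your fallback of interpolating with change of measure between the $(2,2)$ corner and Hausdorff--Young only reaches $\eta\ge\gamma'$, so your instinct to run the dyadic argument for the full range $1<\gamma\le\eta<\infty$ is the right one.
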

Since $q<3$ we may use this Theorem in the case $\eta=q$, $\gamma=p$, $\beta=\frac{2-q}{q}$ and $b=\frac{2-p}{p}$ obtaining 
\begin{equation}
t^{-\frac32(\frac1p-\frac1q)}\left[\int_0^{+\infty}r^{2-q}|\widehat{h}(r)|^qdr\right]^{1/q}\lesssim t^{-\frac32(\frac1p-\frac1q)}\left[\int_0^{+\infty}|h(r)|^pr^{2-p}dr\right]^{1/p},
\end{equation}
which essentially is the desired estimate, indeed
\begin{equation}
\left[\int_0^{+\infty}|h(r)|^pr^{2-p}dr\right]^{1/p}=\|f\|_{L^p}.
\end{equation}
This concludes the proof of \eqref{riso}, which, together with the standard dispersive estimates for the free Laplacian, implies the dispersive estimates for the semigroup $\{e^{itH_0}\}_{t>0}$.

Let us turn in to proving \eqref{nonriso}. Since $q<3$ the function $1/|y|$ belongs to $L^q(B_R)$, hence we can exchange the order of integration and  use Minkowski inequality
\begin{equation}
\begin{aligned}
&\|M_{R}f\|_{L^q}\lesssim \\
&\lesssim t^{-1/2}\int_0^{+\infty}\left\|\int_{B_R}\frac{1}{|x|}\exp\left({-4\pi\alpha s-i\frac{(|x|+|y|+s)^2}{4t}}\right)\frac{\widetilde{f}(y)}{|y|}dy\right\|_{L^q} ds\\
&=t^{-1/2}\int_0^{+\infty}e^{-4\pi\alpha s}\left\|\int_{B_R}\frac{1}{|x|}\exp\left(-i\frac{|y|^2}{4t}-i\frac{|x||y|}{2t}-i\frac{s|y|}{2t}\right)\frac{f(y)}{|y|}dy\right\|_{L^q}ds,
\end{aligned}\end{equation}
which, as before, in spherical coordinates is bounded, up to constants, by
\begin{equation}\label{cane}
t^{-1/2}\int_0^{+\infty}e^{-4\pi\alpha s}\left(\int_0^{+\infty}r^{2-q}\left|\int_0^Re^{-i\frac{r\rho}{2t}} h_s(\rho)(\rho)d\rho\right|^{q}dr\right)ds,
\end{equation}
where 
\begin{equation}
h_s(\rho):=\left\{\begin{array}{cl}
\exp\left(-i\frac{\rho^2}{4t}-i\frac{s\rho}{2t}\right)\rho \widetilde{f}(\rho)\,\,\,\, & 0\leq\rho\leq R \\
0 & \rho\in\mathbb{R}\setminus[0,R] \\
\end{array}\right..
\end{equation}
The  quantity \eqref{cane} is nothing but 
\begin{equation}
t^{-1/2}\int_0^{+\infty}e^{-4\pi\alpha s}\left(\int_0^{+\infty}r^{2-q}\left|\widehat{h_s}\left(\frac{r}{2t}\right)\right|^qdr\right)^{1/q}ds,
\end{equation}
which, arguing as before,  is bounded by $t^{-\frac32 (\frac1p-\frac1q)}\|f\|_{L^p}$. This concludes the proof of \eqref{nonriso}.

 The proof  of \eqref{nonrisoneg} is very similar, indeed after projecting $f$ onto the the absolutely continuous spectrum of $H_{\alpha}$, the first summand in the right hand side of \eqref{alphaneg} disappears  and hence the remaining part can be treated exactly in the same way as done in the proof of \eqref{nonriso}.
\section{Conclusions}
\label{sec:4}
The proof given in section \ref{sec:3} is quite direct and does not use any deep results from scattering theory for the perturbed Hamiltonian $H_{\alpha,Y}$. Nevertheless, it is worth noticing that the proof of Pitt's inequality, the main tool of our argument, requires some technical results from harmonic analysis such as Muckenhaupt estimates (\cite{G} \cite{M}), which play an essential role also in the proof of the $L^p$ boundedness of the wave operators $W^{\pm}$ given in \cite{DMSY}. The main advantage of our approach is that, owing to more general  weighted Fourier inequalities (see for instance \cite{BH}, \cite{H}) rather than Pitt's inequality (in which the weights are forced to be pure powers), it can potentially be adapted to prove optimal $L^p-L^q$ estimates also in the regime $q\geq 3$. In particular, we conjecture the following result:
\begin{conjecture}\label{conuno}
Fix $q\in [3,+\infty]$, and let $w_{q}(x)$ a weight such that $w(x)\equiv 1$ outside a ball centered at the origin and $w_{q}^{-1}G_{i}\in L^q(\mathbb{R}^3)$. Then for every $\alpha\neq 0$ and $y\in\mathbb{R}^3$, the following estimates holds:
\begin{equation}
\Vert w_{q}(\cdot-y)^{-1}e^{itH_{\alpha,y}}P_{ac}f\Vert_{L^q(\mathbb{R}^3)}\lesssim t^{-\frac32\left(\frac1p-\frac1q\right)}\Vert w_{q}(\cdot-y)f\Vert_{L^p(\mathbb{R}^3)}.
\end{equation}
When $\alpha=0$, a similar estimates holds but with a slower time decay:
\begin{equation}
\Vert w_{q}(\cdot-y)^{-1}e^{itH_{\alpha,y}}f\Vert_{L^q(\mathbb{R}^3)}\lesssim t^{-\frac 12}\Vert w_{q}(\cdot-y)f\Vert_{L^p(\mathbb{R}^3)}.
\end{equation}
\end{conjecture}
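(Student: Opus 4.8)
The plan is to run the very same scheme used for the regime $q\in[2,3)$, replacing Pitt's inequality (Theorem \ref{Pittineq}) by a genuinely two-weight Fourier transform norm inequality in which the weights are allowed to be non-homogeneous. First I would reduce, exactly as in Section \ref{sec:3}, to the case $N=1$, $y=0$ and to radial data $f$: since the admissible weight $w_q$ may be taken radial, bounded below, and identically $1$ outside a ball, the non-radial component evolves freely by Lemma \ref{orto}, and multiplication by $w_q^{\pm 1}$ is harmless there (the free estimate decays at least as fast as the target rate, i.e.\ at rate $t^{-\frac32(1/p-1/q)}\le t^{-1/2}$ for $q\ge 3$), so Corollary \ref{radial} carries over to the weighted setting. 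For radial $f$ I would then invoke the explicit propagator \eqref{propagatore}, so that the matter is reduced to bounding, uniformly in $R$, the singular correction terms $M_R$ (for $\alpha=0$), $M_{\alpha,R}$ (for $\alpha>0$) and $\widetilde M_{\alpha,R}P_{ac}$ (for $\alpha<0$, after the bound state $\psi_\alpha$ has been projected out as in the proof of \eqref{nonrisoneg}).

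Passing to spherical coordinates as in the derivation of \eqref{riso}--\eqref{nonriso}, the quantity $\|w_q^{-1}M_{\#}f\|_{L^q}$ becomes, up to constants,
\begin{equation}
t^{-1/2}\int_0^{+\infty}e^{-4\pi\alpha s}\left(\int_0^{+\infty}w_q(r)^{-q}\,r^{2-q}\left|\widehat{h_s}\left(\frac{r}{2t}\right)\right|^q dr\right)^{1/q}ds,
\end{equation}
the factor $w_q(r)^{-q}r^{2-q}$ replacing the pure power $r^{2-q}$ of the unweighted computation (for $\alpha=0$ the $s$-integral is absent). The defining requirement $w_q^{-1}G_i\in L^q$ is precisely the statement that $w_q(r)^{-q}r^{2-q}$ is integrable at the origin, which is exactly the integrability that fails for the bare power when $q\ge 3$. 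At this point, in place of Pitt's theorem, I would apply a general weighted Fourier inequality of Benedetto--Heinig type (\cite{BH}, \cite{H}), valid for $1<p\le q<\infty$ under a Muckenhoupt-type two-weight balance condition, with output weight $\widetilde U_t(\xi)=(2t)^{3-q}w_q(2t\xi)^{-q}\xi^{2-q}$ (obtained from the substitution $r=2t\xi$) and input weight $\rho^{2-p}$, the latter chosen so that the right-hand side reproduces $\|f\|_{L^p}$.

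The time decay would then be read off from the scaling. Writing $I_t=\int_0^\infty w_q(r)^{-q}r^{2-q}|\widehat{h_s}(r/2t)|^q dr=(2t)^{3-q}J_t$ with $J_t=\int_0^\infty w_q(2t\xi)^{-q}\xi^{2-q}|\widehat{h_s}(\xi)|^q d\xi$, there are two regimes. In the non-resonant case $\alpha\neq0$, the phase $e^{-is\rho/2t}$ shifts the argument of $\widehat{h_s}$ away from the origin, so after integrating in $s$ against $e^{-4\pi\alpha s}$ the singular region $\xi\approx 0$ is avoided and the balance condition holds with a constant uniform in $t$; then $J_t=O(1)$, $I_t^{1/q}\sim t^{(3-q)/q}$, and together with the prefactor $t^{-1/2}$ one recovers the full rate $t^{-\frac32(1/p-1/q)}$. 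In the resonant case $\alpha=0$ there is no such shift: the integrability $\int_0 w_q^{-q}r^{2-q}\,dr<\infty$ controls the near-origin contribution $\int_0^{1/t}$, but the tail $\int_{1/t}^{O(1)}\xi^{2-q}d\xi\sim t^{q-3}$ forces $J_t\sim t^{q-3}$, whence $I_t\sim\text{const}$ and only the slower rate $t^{-1/2}$ survives, in agreement with the statement.

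The main obstacle is the weighted Fourier inequality step: one must verify the two-weight balance condition for a weight $w_q$ that interpolates between a genuine singularity at the origin and the trivial weight at infinity, and track the dependence of the resulting constant on $t$ (and, for $\alpha\neq0$, uniformly in $s$) sharply enough to separate the two decay regimes above. The almost-optimal choice of $w_q$ --- the unavoidable loss encoded by the $\epsilon$ in Corollary \ref{almootti} --- is precisely the borderline at which this balance condition can be made to hold, so the delicate part is to identify the correct endpoint weight and to show that the Benedetto--Heinig constant does not degrade beyond the powers of $t$ indicated.
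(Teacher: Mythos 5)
First, a point of order: the paper does not prove this statement. It is stated explicitly as a conjecture, and the authors only remark in Section \ref{sec:4} that their Pitt-based argument ``can potentially be adapted'' to the regime $q\ge 3$ by replacing Pitt's inequality with more general weighted Fourier inequalities. Your proposal is therefore not an alternative to a proof in the paper --- it is an elaboration of the very strategy the authors themselves leave open, and it should be judged as such.

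As a proof it has a genuine gap, in fact two. The first you name yourself: the two-weight Benedetto--Heinig balance condition for the pair $\bigl(w_q(2t\xi)^{-q}\xi^{2-q},\ \rho^{2-p}\bigr)$ is never verified, and the entire content of the conjecture lives in whether that condition holds with a constant whose $t$-dependence is exactly $t^{(3-q)/q}$ (non-resonant) versus $O(1)$ (resonant); asserting ``$J_t=O(1)$'' is a restatement of the conjecture, not a step toward it. The second gap is structural and, I think, fatal to the argument as you have arranged it. You apply Minkowski's inequality in $s$ \emph{before} estimating the inner $L^q$ norm, exactly as in the paper's proof of \eqref{nonriso}. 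After that step, each fixed $s$ contributes the quantity $\bigl(\int_0^{+\infty}w_q(r)^{-q}r^{2-q}\,\vert\widehat{h_s}(r/2t)\vert^q\,dr\bigr)^{1/q}$ with the nonnegative weight $e^{-4\pi\alpha s}$, and the region $s\lesssim 1$ carries a fixed positive fraction of the total mass. For those $s$ the modulation $e^{-is\rho/2t}$ translates $\widehat{h_s}$ by an amount $O(1/t)$, which does not move it off the singular region $\xi\lesssim 1/t$ at all; so the per-$s$ bound is no better than the $s=0$ (i.e.\ resonant) bound, which by your own scaling analysis yields only $t^{-1/2}$. In other words, once Minkowski has been applied, the mechanism that is supposed to distinguish $\alpha\neq 0$ from $\alpha=0$ --- the invertibility of $\Gamma_\alpha(z)=\alpha-\tfrac{iz}{4\pi}$ at $z=0$, which analytically appears only after performing the $s$-integral $\int_0^{+\infty}e^{-4\pi\alpha s}e^{-is\rho/2t}\,ds=(4\pi\alpha+i\rho/2t)^{-1}$ --- has been discarded. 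Any successful adaptation must integrate in $s$ first (or otherwise exploit this cancellation) and only then apply a weighted Fourier inequality to the resulting kernel; the scheme of Section \ref{sec:3}, transplanted verbatim, cannot produce the rate $t^{-\frac32(\frac1p-\frac1q)}$ for $\alpha\neq 0$ when $q\ge 3$.
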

\begin{remark}
Conjecture \ref{conuno} is motivated by the natural principle for which removing the local singularity is enough to get dispersive estimates, and it would improve the result in  Corollary \ref{almootti}. For example when $q=3$ we expect that a logarithmic weight would suffice, while in estimates \eqref{full} and \eqref{full_ris} there appear polynomial weights.
\end{remark}
An alternative conjecture can be expressed in term of weighted Lorentz space, in which context there are other generalizations of Pitt's inequality (see for instance \cite{S}):
\begin{conjecture}
Given $q\in [3,+\infty]$, define the weight
$$w_q:=1+|x|^{\frac{3}{q}-1}$$
Then for every $\alpha\neq 0$ and $y\in\mathbb{R}^3$, the following estimates holds:
\begin{equation}\label{lowf}
\left\Vert w_q(\cdot-y)^{-1}e^{itH_{\alpha,y}}P_{ac}f\right\Vert_{L^{q,\infty}(\mathbb{R}^3)}\lesssim t^{-\frac32\left(\frac1p-\frac1q\right)}\left\Vert w_q(\cdot -y)f\right\Vert_{L^{p,1}(\mathbb{R}^3)}.
\end{equation}
When $\alpha=0$, a similar estimates holds but with a slower time decay:
\begin{equation}\label{lowfr}
\left\Vert w_q(\cdot -y)^{-1}e^{itH_{\alpha,y}}f\right\Vert_{L^{q,\infty}(\mathbb{R}^3)}\lesssim t^{-\frac12}\left\Vert w_q(\cdot -y)f\right\Vert_{L^{p,1}(\mathbb{R}^3)}.
\end{equation}
\end{conjecture}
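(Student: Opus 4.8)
The plan is to transpose the argument of Section~\ref{sec:3} to the Lorentz scale, replacing the classical Pitt inequality (Theorem~\ref{Pittineq}) by the weighted Fourier norm inequalities on Lorentz spaces available in \cite{S}. As a preliminary reduction I would upgrade Corollary~\ref{radial} to the weighted setting: the weight $w_q$ is radial, so the orthogonal splitting $f=f_1+f_2$ of Lemma~\ref{orto} into radial and non-radial parts commutes with multiplication by $w_q(\cdot-y)$, and since the non-radial component evolves freely it suffices to establish \eqref{lowf} and \eqref{lowfr} for radial $f$ with $y=0$. The H\"older step of Corollary~\ref{radial} must then be rerun in the Lorentz scale, which should follow from the boundedness of the spherical averaging projection on $L^{p,1}$.

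With $f$ radial one inserts the Scarlatti--Teta representation \eqref{propagatore}. After subtracting the free evolution (handled by the weighted free dispersive estimate) and, when $\alpha<0$, discarding the bound state $\psi_\alpha$ through $P_{ac}$, everything reduces to the uniform-in-$R$ bounds
\[
\|w_q^{-1}M_Rf\|_{L^{q,\infty}},\qquad \|w_q^{-1}M_{\alpha,R}f\|_{L^{q,\infty}},\qquad \|w_q^{-1}\widetilde M_{\alpha,R}P_{ac}f\|_{L^{q,\infty}}.
\]
Passing to spherical coordinates in $x$ and $y$ as in the derivation of \eqref{cane}, each inner integral is again a rescaled one-dimensional Fourier transform $\widehat{h}(r/2t)$ (in the non-resonant case integrated against the decaying kernel $e^{-4\pi\alpha s}$). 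The decisive new feature for $q\ge3$ is that $w_q(x)^{-1}=(1+|x|^{3/q-1})^{-1}$ behaves like $|x|^{\,1-3/q}$ near the origin, so that $w_q^{-1}G_i\sim|x|^{-3/q}$; since $|x|^{-3/q}\in L^{q,\infty}(\mathbb{R}^3)\setminus L^{q}(\mathbb{R}^3)$, the weak space $L^{q,\infty}$ on the left and the Lorentz space $L^{p,1}$ on the right are \emph{forced}, and this is exactly the borderline at which a Lorentz-refined Pitt inequality operates. Note also that $w_q$ interpolates between the trivial weight at $q=3$ and the weight $1+|x|^{-1}$ of \eqref{wei} at $q=\infty$.

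The analytic core is then to apply an inequality of the shape
\[
\big\| u\,\widehat{g}\,\big\|_{L^{q,\infty}(\mathbb{R})}\lesssim \big\| v\,g\,\big\|_{L^{p,1}(\mathbb{R})},
\]
where $u$ and $v$ are the one-dimensional weights inherited from $w_q$ and from the radial Jacobian $r^2$. The point is that the power dictated by the dilation, $b=\tfrac2p-1$, already sits at the endpoint of the admissible range $b<1/p'$ of Theorem~\ref{Pittineq} when $q=3$ and lies strictly beyond it when $q>3$, so the pure-power Pitt inequality no longer applies; this is precisely why the non-power weight $w_q$ and the refined estimates of \cite{S} are needed, the passage to $L^{q,\infty}$ and $L^{p,1}$ supplying the borderline room that is lost in the strong $L^p$--$L^q$ formulation. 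Matching $u,v$ to the powers of $r$ and performing $r\mapsto r/2t$ then yields the resonant time factor $t^{-1/2}$ directly, and the improved non-resonant rate only after the additional oscillatory gain described below.

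The hard part will be threefold. First, the Fubini--Minkowski exchange behind \eqref{nonriso} relied on $1/|y|\in L^{q}(B_R)$, which fails for $q\ge3$; the weight must be carried through the exchange, splitting the $x$-integral into a near zone $|x|\le1$, where $w_q^{-1}\sim|x|^{1-3/q}$, and a far zone $|x|\ge1$, where $w_q\simeq1$ and the free dispersion governs, the two pieces being reassembled by the quasi-triangle inequality for $L^{q,\infty}$. Second, and most seriously, the target power $\beta=-1/q$ is not the homogeneity-consistent one for $q>3$, so the relevant inequality is not scale-invariant; this is exactly why the non-power weight $w_q$ (the ``$1+$'' taming large $|x|$, the power taming the origin) together with genuinely two-weight Lorentz inequalities in the spirit of \cite{S} must be used, uniformly in the cutoff $R$ and up to $q=\infty$, and verifying their Muckenhoupt/Sawyer type conditions is the new harmonic-analytic content. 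Third, modulus-only Pitt sees only $|h(\rho)|=\rho|\widetilde f(\rho)|$ and hence delivers the resonant rate $t^{-1/2}$ of \eqref{ri_so}; to upgrade this to the faster $t^{-\frac32(1/p-1/q)}$ in the non-resonant case $\alpha\neq0$ one must exploit the oscillation of the quadratic phase $e^{-i\rho^2/4t}$ through a stationary-phase or van der Corput argument before invoking the Fourier inequality. I expect this last point to be the principal obstacle, as it is precisely the mechanism through which the zero-energy resonance manifests itself in the dispersive rate.
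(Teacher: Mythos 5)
The statement you are addressing is not a theorem of the paper: it is stated as a \emph{conjecture}, and the paper offers no proof of it, only two motivating remarks (that $w_q^{-1}G_i\in L^{q,\infty}(\mathbb{R}^3)$, and that it would suffice to treat the endpoint $q=3$, the general case following by interpolation with the known weighted $L^1$--$L^\infty$ bounds \eqref{wei} and \eqref{ri_so} at $q=\infty$). Your proposal is therefore being measured against nothing, and on its own terms it is a research plan rather than a proof: you correctly reproduce the paper's own diagnosis from Section \ref{sec:4} --- that the power $b=\tfrac2p-1$ hits the endpoint $b=1/p'$ of Theorem \ref{Pittineq} exactly at $q=3$, that $w_q^{-1}G_i\sim|x|^{-3/q}$ lies in $L^{q,\infty}\setminus L^q$, and that Lorentz-refined weighted Fourier inequalities in the spirit of \cite{S}, \cite{BH}, \cite{H} are the natural tool --- but you then explicitly defer all three substantive difficulties (the failure of the Fubini--Minkowski exchange for $q\ge3$, the verification of the two-weight Muckenhoupt/Sawyer conditions for the non-power weight $w_q$ uniformly in $R$, and the extraction of the non-resonant decay rate $t^{-\frac32(\frac1p-\frac1q)}$ from the oscillation of the quadratic phase). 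The last point is not a technicality: the modulus-only Fourier-inequality scheme of Section \ref{sec:3} genuinely cannot distinguish $\alpha=0$ from $\alpha\neq0$ and would at best yield the resonant rate $t^{-1/2}$ in both \eqref{lowf} and \eqref{lowfr}, so without a worked-out stationary-phase mechanism the stronger estimate \eqref{lowf} remains entirely open. As it stands, nothing in the proposal establishes either inequality.

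Two further points of comparison with the paper's own (non-)treatment. First, the paper's suggested route is structurally different from yours: prove only the endpoint $q=3$ (where $w_3\equiv 2$ is trivial and the statement is a pure Lorentz-space refinement of Proposition \ref{result} at the excluded exponent) and then interpolate with the $q=\infty$ estimates of \cite{DPT}; this bypasses your near-zone/far-zone decomposition and the two-weight machinery for intermediate $q$, at the cost of requiring real interpolation between a weak-type $(3/2,3)$ bound and a weighted $(1,\infty)$ bound with weights that do not obviously form an interpolation-compatible family --- a gap the paper does not address either. Second, your preliminary reduction via Corollary \ref{radial} needs care: the boundedness of the spherical averaging projection on the \emph{weighted} Lorentz space $L^{p,1}(w_q\,dx)$ is asserted but not proved, and for $p$ close to $1$ (i.e.\ $q$ close to $\infty$) the Lorentz exponent $1$ sits at the endpoint where Minkowski/H\"older arguments of the kind used in Corollary \ref{radial} are most delicate. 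In short: your plan is a reasonable and well-motivated elaboration of the paper's concluding remarks, but it does not constitute a proof, and the paper contains none to compare it with.
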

\begin{remark}
The function $w_q^{-1}G_{i}$ belongs to $L^{q,\infty}(\mathbb{R}^3)$, hence the plausibility of the conjecture. Observe moreover that it would be enough to prove \eqref{lowf} and \eqref{lowfr} when $q=3$, the general case following by interpolation with $q=\infty$, in which case we recover the weighted $L^1-L^{\infty}$ estimates \eqref{wei} and \eqref{ri_so} proved in \cite{DPT}.
\end{remark}

%
%
%

\end{document}